\newcommand{\hide}[1]{}
\newcommand{\bonus}{{\sc Bonus}\xspace}
\newcommand{\bonusValue}{{\ensuremath{\mathrm{b}}}\xspace}
\newcommand{\bonusInc}{{\ensuremath{\mathrm{{\delta}{b}}}}\xspace}
\newcommand{\optBonusValue}{\ensuremath{\mathrm{b_{_{OPT}}}}\xspace}
\newcommand{\bonusPolicy}{\ensuremath{\mathbf{B}}\xspace}
\newcommand{\quota}{{\sc Quota}\xspace}
\newcommand{\quotaValue}{{\ensuremath{\mathrm{q}}}\xspace}
\newcommand{\merit}{{\sc Coefficients}\xspace}
\newcommand{\meritPolicy}{\ensuremath{\mathbf{M}}\xspace}
\newcommand{\fair}{{\sc FA*IR}\xspace}
\newcommand{\median}{{\sc MEDIAN}\xspace}
\newtheorem{problem}{Problem}
\newtheorem{lemma}{Lemma}
\newcommand{\prob}[1]{\ensuremath{\mathbf{P}(#1)}\xspace}
\newcommand{\expect}[1]{\ensuremath{\mathbb{E}[#1]}}
\newcommand{\scoringFunction}[1]{\ensuremath{\mathbf{f}(#1)}\xspace}
\newcommand{\scoringFunctionInstance}{\ensuremath{\mathbf{f}}\xspace}
\newcommand{\cumul}[1]{\ensuremath{F(#1)}\xspace}
\newcommand{\cumulValue}{\ensuremath{\beta}\xspace}
\newcommand{\invcumul}[1]{\ensuremath{F^{-1}(#1)}\xspace}
\newcommand{\rotate}[1]{\ensuremath{{\rho}(#1)}\xspace}
\newcommand{\optFunction}[1]{\ensuremath{o(#1)}\xspace}
\newcommand{\binarysearch}[1]{\ensuremath{\mathrm{binary\-search}(#1)}\xspace}
\newcommand{\direction}{\ensuremath{\mathbf{d}}\xspace}
\newcommand{\interval}{\ensuremath{\mathbf{I}}\xspace}
\newcommand{\sensitive}{\ensuremath{\mathbf{A}}\xspace}
\newcommand{\sensitiveGroup}{\ensuremath{\mathbf{a}}\xspace}
\newcommand{\schoolgrades}{\ensuremath{\mathbf{X}}\xspace}
\newcommand{\schoolgradesValue}{\ensuremath{\mathbf{x}}\xspace}
\newcommand{\score}{\ensuremath{\mathbf{S}}\xspace}
\newcommand{\scoreValue}{\ensuremath{\mathbf{s}}\xspace}
\newcommand{\scoreThreshold}{\ensuremath{\mathrm{\tau}}\xspace}
\newcommand{\admission}{\ensuremath{\mathbf{T}}\xspace}
\newcommand{\admissionFunction}[1]{\ensuremath{\mathbf{t}(#1)}\xspace}
\newcommand{\admissionFunctionInstance}{\ensuremath{\mathbf{t}}\xspace}
\newcommand{\admissionFraction}{\ensuremath{\mathrm{\theta}}\xspace}
\newcommand{\excellence}{\ensuremath{\mathbf{Y}}\xspace}
\newcommand{\unavailable}{\ensuremath{\mathbf{n/a}}\xspace}
\newcommand{\uoa}{\ensuremath{\mathbf{UoS}}\xspace}
\newcommand{\dmd}{\ensuremath{\mathbf{DmD}}\xspace}
\newcommand{\schoolgradesWeight}{\ensuremath{\mathbf{w}}\xspace}
\newcommand{\someIntercept}{\ensuremath{{\alpha_0}}\xspace}
\newcommand{\normalization}{\ensuremath{{\alpha}}\xspace}
\newcommand{\bestWeights}{\ensuremath{\mathbf{c}}\xspace}
\newcommand{\optWeights}{\ensuremath{\mathbf{c_{_{\text{opt}}}}}\xspace}
\newcommand{\spara}[1]{\smallskip\noindent{\bf{#1}}}
\newcommand{\mpara}[1]{\medskip\noindent{\bf{#1}}}
\newcommand{\squishlist}{\begin{list}{$\bullet$}
  { \setlength{\itemsep}{0pt}
     \setlength{\parsep}{3pt}
     \setlength{\topsep}{3pt}
     \setlength{\partopsep}{0pt}
     \setlength{\leftmargin}{1.5em}
     \setlength{\labelwidth}{1em}
     \setlength{\labelsep}{0.5em} } }
\newcommand{\squishend}{
\end{list}  }
\newcommand{\ourtitle}{Affirmative Action Policies for Top-k Candidates Selection}
\newcommand{\oursubtitle}{With an Application to the Design of Policies for University Admissions}
\title{\ourtitle}
\author{Michael Mathioudakis}
\affiliation{%
  \institution{University of Helsinki}
  % \city{City}
  % \country{Country}
}
\author{Carlos Castillo}
\affiliation{%
  \institution{Universitat Pompeu Fabra}
  % \city{City}
  % \country{Country}
}
\author{Giorgio Barnabo}
\affiliation{%
  \institution{Sapienza University of Rome}
  % \city{City}
  % \country{Country}
}
\author{Sergio Celis}
\affiliation{%
  \institution{Universidad de Chile}
  % \city{City}
  % \country{Country}
}
\begin{abstract}
We consider the problem of designing affirmative action policies for selecting the top-k candidates from a pool of applicants.
We assume that for each candidate we have socio-demographic attributes and a series of variables that serve as indicators of future performance (e.g., results on standardized tests) -- as well as historical data including the actual performance of previously selected candidates. % who were selected under the current policy.
%
% Critically, performance information is only available for candidates who were selected under some previous selection policy.
%
We consider the case where
% , due to legal requirements or voluntary commitments, 
an organization wishes to increase the selection of people from disadvantaged socio-demographic groups.
Hence, we seek to design an \emph{affirmative action} policy to select candidates who are more likely to perform well, but in a way that increases the representation of disadvantaged groups.

Our motivating application is the design of university admission policies to bachelor's degrees.
We use a causal framework to describe several families of policies (changing component weights, giving bonuses, and enacting quotas), and compare them both theoretically and through extensive experimentation on a real-world dataset containing thousands of university applicants.
%
% Our paper is the first to place the problem of affirmative-action policy design within the framework of {\it algorithmic fairness}.
%
Our empirical results indicate that simple policies could favor the admission of disadvantaged groups without significantly compromising on the quality of accepted candidates.
% favor is the American English variant -- we're using American English
\end{abstract}
\keywords{algorithmic fairness, data-driven policy design}
\begin{document}

\maketitle

%!TeX root = admissions.tex

\section{Introduction}
\label{sec:introduction}

The emergent academic field of algorithmic fairness is concerned with the impact of algorithmic decision making on society, particularly over disadvantaged groups.
%
% In recent years, researchers across several areas of computing, including data mining, machine learning, and information retrieval, have sought to develop methods that are non-discriminatory, i.e., that do not generate an unjustified disadvantage for members of a socially salient group.

\spara{Predictive policies for top-k selection.}
We consider a general setting in which we have a large pool of applicants for, e.g., an educational program, a scholarship, or a job, and we want to select the $k$ most promising ones.
This selection is based on the candidates' predicted future performance.
For instance, in the case of university admissions, this prediction stems from previous grades obtained in different subjects (transcripts), scores in relevant standardized tests, ratings from interviews with the candidate, and so on.
The predictor, which induces a \emph{selection policy}, is created from historical data about the academic performance of previously selected candidates.
A key constraint is that such outcomes are only observed for candidates that have been selected in the past -- but not for rejected ones.

\spara{Affirmative action in top-k selection.}
Algorithmic fairness is commonly operationalized through a notion of \emph{demographic parity} or \emph{statistical parity}, which requires that a decision such as accepting or rejecting a candidate is independent of the sensitive attribute
(for a criticism of this notion see \cite{dwork2012fairness}).
In our setting, for example, we want to admit candidates of different groups at similar rates.
%
%In some cases we might not seek equality, but a reduction of inequality; in any case this notion is more commonly known as \emph{affirmative action} or \emph{positive action}.
%
Given that disparities exist, we seek an \emph{affirmative action} policy: a temporary intervention aimed at increasing the representation of an under-represented group.

\spara{University admissions.}
The problem domain we use throughout this paper is university admission policies.
In several countries, standardized tests for university admission are administered at the end of the last year of high school or community college.
They are often framed as a measurement of aptitude for university studies, i.e., a prediction of how a student would eventually perform if admitted.
For most students, admission decisions are based on an \emph{admission score} computed from test scores and/or high school grades (e.g., through a linear combination).
 % or other measures of previous academic achievement.

Admission scores offer one mechanism for admission targeted at aptitude.
However, universities may also want to have a diverse body of students, or may wish to promote gender equality and social mobility.
For this reason, university admissions often include a predictor of students' future performance plus a patchwork of additional affirmative action programs addressed at increasing the number of accepted candidates from disadvantaged backgrounds. %, as we explain next.

% It is thus reasonable to ask whether one can design admission policies with certain desirable properties.
% 150 million students in a Bachelor degree in 2017
% http://data.uis.unesco.org/
% ISCED 6 programs both sexes"

\spara{Designing affirmative action policies.}
The goals of an affirmative action top-k selection policy are twofold.
First, it should select candidates who have a high expected performance.
Second, it should ensure a sufficient number of candidates from disadvantaged backgrounds are selected.
Both properties can be seen as expressing notions of fairness: the former towards good performance; the latter against disadvantages that are not related with performance.

Effectively, to design such a policy one needs to predict its effect before it is enacted.
%
%Suppose a policy adds a bonus to the score of candidates from a disadvantaged background.
%
%How many more candidates from that background we expect to selected? And how will they perform if selected?
%
In this paper, we formalize the problem of designing data-driven affirmative action policies and to develop algorithms to build such policies starting from historical data.
Our contributions have direct application to university admissions, but they are directly extensible to other domains.

\spara{Our approach.}
To formalize the questions we consider, we build a probabilistic causal model that captures the selection process. % and makes our assumptions about the underlying data explicit.
The causal model takes into account each applicant's input attributes (e.g., test scores), as well as a number of {\it sensitive} socio-demographic attributes, to compute an admission score, which is the main basis for their selection.
Given the causal model, we consider three types of policies:
\begin{inparaenum}
\item \merit: policies that adjust the weight of different components of the admission score to emphasize those that favor disadvantaged groups;
\item \bonus: policies that give extra points in the admission score to applicants from disadvantaged groups; and
\item \quota: policies that enforce a certain selection rate for applicants from disadvantaged groups.
\end{inparaenum}
Policies are considered from the perspective of a single institution and optimized to increase the expected performance of the selected applicants while accepting a suitable proportion of disadvantaged applicants.

In our analysis, we build instances of causal models from the available data, find suitable parameters for a series of affirmative action policies,
%
%Furthermore, we discuss how our algorithms extend to simpler or more elaborate variants of the causal model.
%
and present the results of an empirical study over university admissions considering thousands of applicants.

%!TeX root = admissions.tex
\section{Related Work}\label{sec:related}

\subsection{Fairness-aware top-k selection}

% This paper belongs to the emergent research field of \emph{algorithmic fairness}, specifically to fairness in predictive decision making (for a survey, see~\cite{mitchell2018prediction}).
%
% Historically, data mining and machine learning applications have been the focus of most works, with comparatively less research dedicated to fairness in ranking or top-k selection.
%
Ranking with fairness constraints, in batch processing and online processing scenarios respectively, is the focus of two recent works by \citet{celis2017ranking} and \citet{stoyanovich2018online}.
In these works, every element to be ranked has a certain utility and belongs to a certain group, and for every ranking position $r=1,2,\dots,k$ there are two constraints per group indicating the minimum and maximum number of candidates from that group that must be among the top-$r$ positions.
The goal is to rank candidates by decreasing utility without violating the constraints.

\citet{zehlike2017fa} define a \emph{fair representation condition} based on a binomial test that seeks to determine if a ranking of $k$ elements is compatible with a process in which candidates have been drawn at random from protected and non-protected groups with a known probability.
The fair representation condition together with a process for adjusting for multiple hypothesis testing can be used to obtain the bounds used in ranking with fairness constraints.
\citet{kearns2017meritocratic} study the case of selecting $k$ candidates drawn from $d$ different groups in which candidates across groups are not comparable.
Hence, what matters is the relative position of candidates within their groups.

\hide{
When a ranking function is employed, one approach to ensure that equal representation of groups is to use ``repaired'' values for the features of the ranked items, such that the repaired feature distributions of the two groups are the same.
In this spirit, \citet{feldman2015certifying} offer algorithms to obtain such repaired values, in a way that the ranking within each group is preserved, while it is impossible to distinguish the group of a ranked item from its repaired features.
}

\subsection{Affirmative action}

\spara{``Downstream'' effects of affirmative action.}
Several recent papers study ``downstream'' or ``cascade'' effects of affirmative action policies, mostly through a theoretical approach.
\citet{hu2018short} consider a labor market in which firms are either temporary or permanent employers. Temporary employers are forced to use hiring policies that favor disadvantaged groups, while permanent employers can hire based on expected utility alone. The goal is to increase the employability of candidates from disadvantaged groups through their employment in the temporary market.
\citet{kannan2019downstream} analyze the potential effects of affirmative action in education on the employability of graduates from disadvantaged groups. The goal is to achieve parity of employability or create an incentive for employers to adopt a color-blind hiring policy.
\citet{mouzannar2019fair} analyze the effect of affirmative action policies on the qualifications of different groups in society.

\spara{Data-driven analysis.}
Over the last four decades prestigious universities, many of which are public institutions, have become even more selective, having a negative effect on students from low socio-economic backgrounds and underrepresented minorities \cite{posseltetal2012access}.
This effect is mainly explained by the lower performance of disadvantaged students on standardized tests and other admission requirements that favor applicants from well-off families \cite{bastedobowman2017admission}.
However, admission scores might underestimate university performance for disadvantaged students.
For instance, a study by \citet{wightman1998lsac} on nearly 25,000 law students in the US found that while students of color had significantly poorer results on admission tests than white students, their difference in terms of performance once they were admitted (their probability of passing the bar exam) did not justify the large gap between the admission rates of both groups.
%
%``Although students of color entered law school with academic credentials, as measured by UGPA and LSAT scores, that were significantly lower than those of white students, their eventual bar passage rates justified admission practices that look beyond those measures.''

As a result, countries and universities have promoted affirmative action policies. % to increase diversity and social representation.
For instance, Brazil has promoted bonus \cite{estevan2018redistribution} and racial quotas \cite{francistannuri2012brazilaffirmativeaction}.
The evidence suggests that the implementation of these policies increased both black students and students coming from low socioeconomic backgrounds in large public universities \cite{francistannuri2012brazilaffirmativeaction}.
In Chile, where significant gaps are observed in standardized math tests between students of public versus private schools \cite{cornejo2006experimento} and between women and men \cite{ariasetal2016genderadmision}, coefficient-based policies have also shown some effectiveness \cite{larroucauetal2015rankinadmission}.
\hide{
Institutional policies seem to have an effect on admission systems that include socio-demographic information in the selection process, as done by several universities in the US.
For instance, \citet{bastedoetal2018admission} found that when admission officers in the US consider contextual variables (economic hardship, family background) into the admission decision, the chances of admitting disadvantaged students increases.
}

\subsection{Causal reasoning and fairness}

\citet{hardt2016equality} explain how causal reasoning, as developed by Pearl~\cite{bookofwhy} and others, offers a theoretical framework to express notions of algorithmic fairness that go beyond statistical / observational measures.
Indeed, the work of \citet{kusner2017counterfactual} and \citet{kilbertus2017avoiding} builds upon that intuition by studying notions of \emph{counterfactual fairness} that take into account the causal structure of the data's generative model.
In our work, we use a causal framework to define the model for the admission mechanism (Section~\ref{sec:setting}).
% , and confirm that the available data allow us to make unbiased evaluation of admission policies (Section~\ref{sec:dataanalysis}).

%!TeX root = admissions.tex
\section{Setting}
\label{sec:setting}

In the setting we consider, selection of candidates is performed by \emph{one institution}.
We begin by defining a probabilistic causal model that captures this process, expresses our assumptions about the data, and allows us to quantify the effect of different selection policies (Section~\ref{sec:causalmodel}).
For a given policy, we then define measures of \emph{utility,} capturing the performance of selected candidates, and \emph{fairness,} capturing the similarity between selection rates (Section~\ref{sec:measures}).
Next, we describe three types of policies (Section~\ref{sec:admissionpolicies}), and present a formal definition of the technical problem we address as a trade-off between utility and fairness (Section~\ref{sec:problemdefinition}).
Finally, we discuss various modeling choices we make (Section~\ref{sec:modeldiscussion}).

\subsection{Causal model}
\label{sec:causalmodel}

Our design of selection policies is guided by historical data that are the result of a particular enacted selection policy.
To predict what data we would observe under alternative policies, it is necessary to express formally our knowledge and assumptions about the decision mechanisms we study.
We do this by defining a \emph{causal diagram}~\cite{bookofwhy} capturing our background knowledge about causal relationships between quantities of interest, as shown in Figure~\ref{fig:basicdiagram}.
The diagram describes the selection mechanism from the point of view of \emph{one candidate} to \emph{one institution}, and assumes that each candidate is associated with socio-economic attributes \sensitive, which are considered sensitive, and a series of input scores \schoolgrades (e.g., results of qualification tests), which are considered non-sensitive.
These are combined into a selection score \score which is the basis for the institution's decision \admission.
In particular, given input scores \schoolgrades = \schoolgradesValue and socio-economic attributes \sensitive = \sensitiveGroup, the candidate obtains a selection score \score = \scoreValue according to %a mathematical formula \scoreValue = \scoringFunction{\schoolgradesValue, \sensitiveGroup}.
$
	\score = \scoringFunction{\schoolgrades, \sensitive}.
% \label{eq:affirmativescore}
$
Here, we allow the selection score \score to depend on \sensitive to allow for affirmative action through the selection score function.
Based on this score \score a decision \admission is made in favor or against the candidate.
Again, we allow \admission to be determined by \sensitive so as to allow also for affirmative action through policies that operate at the stage of decision.

\begin{figure}[t]
\begin{center}
\subfloat{
\includegraphics[width=0.35\columnwidth]{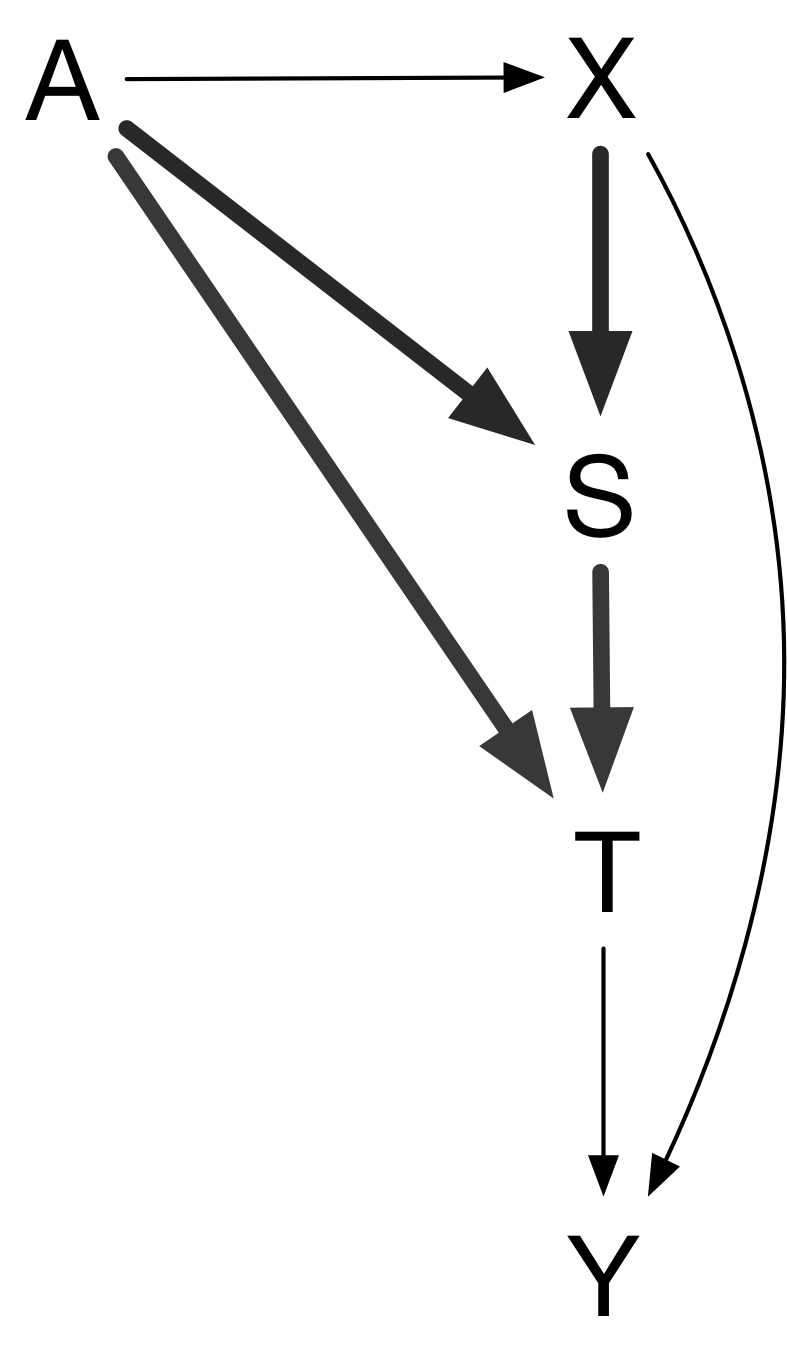}
}

\subfloat{
	\begin{tabular}[t!]{cl}\toprule
	Symbol & Description \\ \midrule
	\sensitive & Socio-economic attributes \\
	% \preferences & applicant preferences \\
	\schoolgrades & Input scores (e.g., results of qualification tests)\\
	\score & Selection score: combination of input scores\\
	% \rank & applicant ranking \\
	\admission & Decision of the institution: select or not select \\
	\excellence & Performance of selected candidate \\ \bottomrule
	\end{tabular}
}
\end{center}
\caption{Diagram of causal model. Selection policies are defined via the causal relationships $(\sensitive, \schoolgrades)\rightarrow\score$ and $(\sensitive, \score)\rightarrow\admission$.
Relationships affected by the affirmative action policy are shown with thick arrows.}
\label{fig:basicdiagram}
\end{figure}

For the purposes of affirmative action policy design, it is also crucial that we model the possibility that sensitive socio-economic attributes \sensitive influence the input scores \schoolgrades.
This is done through the dependency $\sensitive\rightarrow\schoolgrades$ in the diagram.
%
% In particular, it is possible that low socio-economic status may cause low input scores.
%
This is the case if, for instance, university applicants from families of higher income have access to training or resources that are not affordable for families of lower income.
% and low performance of admitted students;
In that case, an affirmative action policy tries to bring \emph{equality of opportunity} \cite{roemer1998equality}.

The institution under consideration is interested in the acceptance of only a certain number or fraction of top candidates.
Formally, we require every selection policy to be \emph{calibrated} -- i.e., that it lead, in expectation, to the selection of a certain fraction \admissionFraction of candidates, considered known and fixed for the institution.
\begin{equation}
\expect{\admission} = \prob{\admission = 1} = \admissionFraction \label{eq:calibration}
\end{equation}

If the decision for a candidate is positive ($\admission = 1$), we observe their performance \excellence.
If a candidate is rejected the performance of that candidate is not observed, and we write $\excellence = \unavailable$.

Finally, to simplify the presentation of our analysis, we will always assume implicitly that continuous variables in our model are associated with well-behaved density functions -- in particular, with continuous cumulative distributions.

\subsection{Measures}
\label{sec:measures}

We aim to define selection policies that lead to the selection of the candidates with the highest expected performance.
At the same time, we aim to mitigate the disadvantage of certain demographic groups.
Towards this end, we define the following measures.

\spara{Utility of Selection (\uoa).}
A policy is useful if it leads to the selection of the candidates that actually perform best.
This is captured by the following measure, defined in terms of the causal model.
\[
\uoa = \expect{\excellence | \admission = 1}
\]

\spara{Demographic Disparity (\dmd).}
Our objective is to produce selection policies that minimize the difference between the selection rates of candidates of different backgrounds, according to their socio-economic features \sensitive.
The following measure is defined for a group of candidates with sensitive properties \sensitive = \sensitiveGroup; it compares the selection rate for this group to the candidates outside the group.
\[
\dmd_{_\sensitive} = {\prob{\admission = 1 | {\sensitive = \sensitiveGroup}}} - {\prob{\admission = 1 | {\sensitive \not= \sensitiveGroup}}}
\]

The demographic disparity is equivalent to the \emph{risk difference} measure from the literature \cite{donini2018empirical}, where risk is the probability of not being selected $1-\prob{\admission=1}$, and the difference is taken between the sensitive group and its complement.

\subsection{Types of selection policies}
\label{sec:admissionpolicies}

To define alternative selection policies, we essentially have to define instances of the relationships $(\sensitive, \schoolgrades)\rightarrow\score$ and $(\sensitive, \score)\rightarrow\admission$, marked with thick arrows in Figure~\ref{fig:basicdiagram}.
In other words, for a given causal model, we aim to define the scoring function \score = \scoringFunction{\sensitive, \schoolgrades} and selection function \admission = \admissionFunction{\sensitive, \score} so that the policy exhibits desired values of the utility and fairness measures as described earlier.
We say that every pair of functions \scoringFunction{\sensitive, \schoolgrades} and \admissionFunction{\sensitive, \score} defines a \emph{selection policy} within the \emph{selection mechanism} defined by the model.
We consider three types of policies: \merit, \bonus, and \quota.

\spara{\merit-based policies}. This type of policy is completely determined by a series of coefficients \schoolgradesWeight that are used to compute the selection score \score as a linear combination of the input scores \schoolgrades.
No sensitive attributes are used.
The score is computed as
\begin{equation}
	\score = \scoringFunction{\schoolgrades = \schoolgradesValue} = \schoolgradesWeight\cdot\schoolgradesValue,
		\;\;\; |\schoolgradesWeight|_{_1} = 1.
		\label{eq:meritscoring}
\end{equation}
Note that \schoolgradesWeight is normalized. Each component of \schoolgradesWeight represents the degree to which an input score counts towards the selection score.
Candidates are admitted if their score exceeds a threshold \scoreThreshold, in which case we write $\admission = 1$ --- otherwise, we have $\admission = 0$.
\begin{equation}
\admission = \admissionFunction{\score} =
	\begin{cases}
		1\ \text{if}\ \score \geq \scoreThreshold \\
		0\ \text{otherwise}
	\end{cases} \label{eq:meritadmission}
\end{equation}

\spara{\bonus policies}.
This type of policy differs from \merit policies in that sensitive attributes \sensitive contribute to the selection score of candidates.
Specifically, the group of candidates with \sensitive = \sensitiveGroup receive an additive bonus \bonusValue in their score.
\begin{equation}
	\score = \scoringFunction{\schoolgrades = \schoolgradesValue, \sensitive} =
		\begin{cases}
			\schoolgradesWeight\cdot\schoolgradesValue + \bonusValue,\hfill  \text{if}\ \sensitive = \sensitiveGroup \\
			\schoolgradesWeight\cdot\schoolgradesValue,\quad\quad  \text{if}\ \sensitive \not= \sensitiveGroup
		\end{cases}, \quad |\schoolgradesWeight|_{_1} = 1.
\end{equation}
Note that the selection decision \admission is determined in exactly the same way as for \merit policies, i.e., via a threshold \scoreThreshold.

\spara{\quota policies}.
This type of policy determines the selection score in the same way as \merit, i.e., via a linear function with normalized weights \schoolgradesWeight.
However, unlike \merit policies, sensitive attributes \sensitive contribute to the decision \admission.
\quota policies constraint the selected candidates from group $\sensitive = \sensitiveGroup$ to be (in expectation) a fraction $\quotaValue\in[0,1]$ of accepted candidates.
% \note[ChaTo]{I think it is more natural to define the quota in terms of the group \sensitiveGroup (instead of its complement), so that they are a fraction \quotaValue.} % -- Done-MM
%
\begin{equation}
\prob{\admission = 1, \sensitive = \sensitiveGroup} = \quotaValue\cdot\admissionFraction \label{eq:quota}
\end{equation}
% \note[Giorgio]{what is theta? shouldn't it be given T=1?} -- It's defined above and we have defined it in this way consistently in the paper, so let's keep it as is
%
Specifically, the selection decision \admission under a \quota policy is determined based on separate thresholds
$\scoreThreshold_{_{(\sensitive \not= \sensitiveGroup)}}$ and
$\scoreThreshold_{_{(\sensitive = \sensitiveGroup)}}$
for the two groups, $(\sensitive=\sensitiveGroup)$ and $(\sensitive\not=\sensitiveGroup)$, respectively.
\begin{equation}
\admission = \admissionFunction{\score, \sensitive} =
	\begin{cases}
		1\ \text{if}\ (\sensitive\not=\sensitiveGroup)\ \text{and}\ \score \geq \scoreThreshold_{_{(\sensitive \not= \sensitiveGroup)}} \\
		1\ \text{if}\ (\sensitive=\sensitiveGroup)\ \text{and}\ \score \geq \scoreThreshold_{_{(\sensitive = \sensitiveGroup)}} \\
		0\ \text{otherwise}
	\end{cases} \label{eq:quotaadmission}
\end{equation}
The two thresholds are chosen so that Equation~\ref{eq:quota} is satisfied.
%
%The case where we want equal selection rates for people in different groups is a special case, and given we know the number of candidates from each group is trivial to implement.

\subsection{Problem definition}
\label{sec:problemdefinition}

Having defined the set of policies that we consider, we can now define our objective formally.
If we were to ignore the goal of favoring disadvantaged group $(\sensitive=\sensitiveGroup)$ via an affirmative action policy, it would be natural to aim for the policy that optimizes the utility \uoa of the selection mechanism.
On the other extreme, if we were to target only the demographic disparity of those selected, it would be natural to aim for a policy that eliminates \dmd (i.e., leads to $\dmd = 0$).
In general, we consider cases where we are willing to trade $\lambda$ units of utility to decrease disparity by one unit.
We thus define our technical objective as follows.
\begin{problem}
For sensitive property \sensitive, and two associated groups of candidates $(\sensitive=\sensitiveGroup)$ and $(\sensitive\not=\sensitiveGroup)$,
define \scoringFunction{\sensitive, \schoolgrades} and \admissionFunction{\sensitive, \score} so as to obtain a calibrated policy that maximizes
\[
\optFunction{\scoringFunctionInstance,\admissionFunctionInstance} = \uoa(\scoringFunctionInstance,\admissionFunctionInstance) - \lambda\ |\dmd_{_\sensitive}(\scoringFunctionInstance,\admissionFunctionInstance)|,\;\;\lambda \geq 0.
\]
\label{problem:utilitydisparity}
\end{problem}
Notice that the objective function considers the absolute value of \dmd, as we wish to eliminate \dmd rather than maximize it in favor of the currently disadvantaged group.

\subsection{Discussion on modeling choices}
\label{sec:modeldiscussion}
The causal model described above might not match exactly every real selection mechanism. %, as it is based on certain  assumptions, some made for the purposes of simplification.
We briefly discuss its assumptions in the context of university admissions. %, in which we apply our data analysis (Section~\ref{sec:dataanalysis}).

\spara{Candidate decisions might not be independent.}
In our setting, the decision is made independently for each applicant based on their individual attributes.
This modeling choice simplifies the analysis considerably, while it still allows us to understand the properties of the different policies.
However, it does not hold in some cases.
For example, for many universities there is a limit on the number of admitted students -- therefore admissions are \emph{not independent}, as one candidate's admission can be another's rejection.

\spara{Selection does not imply enrollment.}
We also assume that an applicant who is admitted by a university is bound to attend it.
However, in reality applicants do not apply for admission to one institution only and might be admitted by several.
In some systems, applicants are forced to prioritize their preferences for institutions and attend the first to admit them.
In other systems, applicants can choose which one to attend.
The model does not capture explicitly such admission systems.
However, its assumptions work well for top universities that students are almost certain to attend if admitted.
This is the case for our empirical analysis, which concerns a top university to which most selected applicants enroll (about $95\%$).
\hide{
We also acknowledge that highly selective universities with affirmative action policies may create a mismatch issue (i.e., accepted students may be better off in another type of institution) as reported in countries such as the US and India \cite{frisanchokrishna2015missmatchindia}.
}
%!TeX root = admissions.tex
\section{Policy search for a single sensitive attribute}

This section describes algorithms to determine affirmative action policies for a single sensitive attribute \sensitive.
Section~\ref{sec:multiplesensitiveattributes} extends them to multiple sensitive attributes.

\subsection{\merit policy}
\label{sec:algorithmsformerit}

Let us consider the case of $\lambda = 0$, in which the objective function simplifies to the measure of utility \uoa.
Let  $m(\schoolgradesValue) = \expect{\excellence | \schoolgrades = \schoolgradesValue, \admission = 1}$ be the expected performance \excellence of a selected applicant with input scores \schoolgradesValue.
By straightforward calculations, we find that
\begin{equation}
	\uoa = \expect{\excellence | \admission = 1} =
		\frac{1}{\admissionFraction}
		\int m(\schoolgradesValue)\ \delta(\scoringFunction{\schoolgradesValue}\geq\scoreThreshold)\
		d\prob{\schoolgradesValue}.
		\label{eq:expandedUoA}
\end{equation}
Intuitively, Eq.~\ref{eq:expandedUoA} expresses that \uoa is the average expected performance of applicants that exceed the threshold \scoreThreshold.
%
%It is easy to see from the equation that
Using $\scoringFunction{\schoolgradesValue} = m(\schoolgradesValue)$ maximizes \uoa over all possible scoring functions.
In other words, if we could use the expected performance of an applicant as the scoring function, the resulting \merit policy would select those applicants that are expected to have the best performance \excellence and it would be optimal for $\lambda = 0$.
However, we are constrained to use linear functions as scoring function \scoringFunction{\schoolgrades}, as defined in Equation~\ref{eq:meritscoring}.

In the case where $m(\schoolgradesValue)$ is indeed a linear function, then using $\scoringFunction{\schoolgradesValue} = m(\schoolgradesValue)$ leads to an optimal \merit policy  for $\lambda = 0$.
To state the claim formally, let \cumul{\score = \scoreValue} be the cumulative probability distribution for selection \score and \invcumul{\cumulValue} the inverse cumulative distribution, i.e., the function that returns the score below which lies a fraction \cumulValue of the population of applicants.

\begin{lemma}
Let
$
	m(\schoolgradesValue) = \someIntercept + \normalization\ \bestWeights\cdot\schoolgradesValue
$
with normalized coefficients, $|\bestWeights|=1$.
Then, the following \merit policy is optimal for $\lambda = 0$:
\begin{align}
	\score & = \scoringFunction{\schoolgrades, \sensitive} = \bestWeights\cdot\schoolgrades \nonumber\\
	\admission & = \admissionFunction{\score, \sensitive} =
	\begin{cases}
		1\ \text{if}\ \score \geq \scoreThreshold \\
		0\ \text{otherwise}
	\end{cases} \label{eq:goodmerit} \\
	\scoreThreshold & = \invcumul{1 - \admissionFraction}. \nonumber
\end{align}
\end{lemma}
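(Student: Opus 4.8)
The plan is to show that the proposed policy selects exactly the set of candidates that an \emph{unconstrained} optimal scoring function would select, and that it does so while respecting calibration; optimality among all \merit policies then follows a fortiori. I would begin from Equation~\ref{eq:expandedUoA}, which expresses \uoa as the average of $m(\schoolgradesValue)$ over the region $\{\schoolgradesValue : \scoringFunction{\schoolgradesValue} \geq \scoreThreshold\}$ of selected applicants. Since calibration (Equation~\ref{eq:calibration}) fixes the selected fraction to \admissionFraction, maximizing this average amounts to choosing, among all regions of probability mass \admissionFraction, the one on which $m$ is largest; the maximizer is the upper level set of $m$, which is precisely what the ideal scoring function $\scoringFunction{\schoolgradesValue} = m(\schoolgradesValue)$ achieves. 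This rearrangement fact is exactly what is asserted in the text preceding the lemma, so I would take it as given.

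The key step is then to observe that thresholding on $\score = \bestWeights \cdot \schoolgrades$ induces the same selection region as thresholding on $m$. Because $m(\schoolgradesValue) = \someIntercept + \normalization\,\bestWeights \cdot \schoolgradesValue$ is a strictly increasing affine function of the score $\bestWeights \cdot \schoolgradesValue$ (using $\normalization > 0$), the two orderings of applicants coincide, and the event $\{\score \geq \scoreThreshold\}$ is identical to an upper level set $\{m(\schoolgradesValue) \geq \someIntercept + \normalization\,\scoreThreshold\}$ of $m$. Thus the proposed \merit policy selects an upper level set of $m$, matching the form of the optimal selection region.

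It remains to verify calibration and pin down the threshold. Using the continuity of the cumulative distribution \cumul{\cdot} of \score, the choice $\scoreThreshold = \invcumul{1 - \admissionFraction}$ gives $\prob{\score \geq \scoreThreshold} = 1 - \cumul{\scoreThreshold} = \admissionFraction$, so the policy is calibrated and its selection region is exactly the upper level set of $m$ of mass \admissionFraction. Combined with the previous step, the policy attains the same \uoa as the unconstrained optimum over all scoring functions, hence is optimal over all calibrated policies and in particular over all calibrated \merit policies.

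The main subtlety, I expect, lies in reconciling the normalization: the lemma writes the weights as \bestWeights with $|\bestWeights| = 1$, whereas a \merit policy constrains the weight vector as in Equation~\ref{eq:meritscoring}. I would resolve this by noting that the induced ranking, and therefore both the selection region and the value of \uoa, is invariant under any positive rescaling of the weights, with \scoreThreshold absorbing the scale factor. Hence \bestWeights may be rescaled to meet the required normalization without altering the policy's behavior, so the displayed policy is a legitimate \merit policy. A secondary point worth stating explicitly is the sign assumption $\normalization > 0$, which makes $m$ increasing in the score and underlies the level-set correspondence in the second step.
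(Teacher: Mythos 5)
Your proof is correct and follows essentially the same route as the paper, which states this lemma without a formal proof and instead sketches the argument in the surrounding prose: the unconstrained maximizer of \uoa in Equation~\ref{eq:expandedUoA} is the scoring function $m$ itself, and since $m$ is affine in $\bestWeights\cdot\schoolgradesValue$, thresholding the linear score at $\scoreThreshold = \invcumul{1-\admissionFraction}$ selects the same top-\admissionFraction mass of applicants and is calibrated by construction. You are in fact more careful than the paper on two points it leaves implicit: the sign assumption $\normalization > 0$ (without which the ordering would reverse and the policy would select the worst applicants) and the mismatch between the lemma's normalization $|\bestWeights| = 1$ and the $l_1$ normalization required of \merit policies in Equation~\ref{eq:meritscoring}, both of which you resolve correctly.
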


Conceptually, the optimal \merit policy for $\lambda = 0$ corresponds to ordering applicants in decreasing order of $\bestWeights\cdot\schoolgrades$ and keeping the top \admissionFraction-fraction of them by setting the threshold \scoreThreshold appropriately.
By construction, this policy is calibrated.

For general $\lambda\geq 0$, \bestWeights is not necessarily optimal anymore -- instead, the optimal \merit policy would require different score weights.
To see why, notice that, for two data points $\schoolgradesValue_1$, $\schoolgradesValue_2$, we have
\begin{equation*}
	\schoolgradesWeight\cdot\schoolgradesValue_1 > \schoolgradesWeight\cdot\schoolgradesValue_2
	\;\;\text{iff}\;\; \|\schoolgradesValue_1\|\cos(\schoolgradesWeight, \schoolgradesValue_1) > \|\schoolgradesValue_2\|\cos(\schoolgradesWeight, \schoolgradesValue_2).
\end{equation*}
Therefore, the selection score depends on the angle between input scores \schoolgradesValue and score weights \schoolgradesWeight.
In general, a rotation of score weights \bestWeights by some angle \direction might lead to score weights $\schoolgradesWeight$ that change the ordering of datapoints $\{\schoolgradesValue\}$ by score value $\{\scoringFunction{\schoolgradesValue}\}$ -- and thus possibly lead to smaller \dmd and better objective value (see Figure~\ref{fig:angles}).

\begin{figure}
\includegraphics[width=0.60\columnwidth]{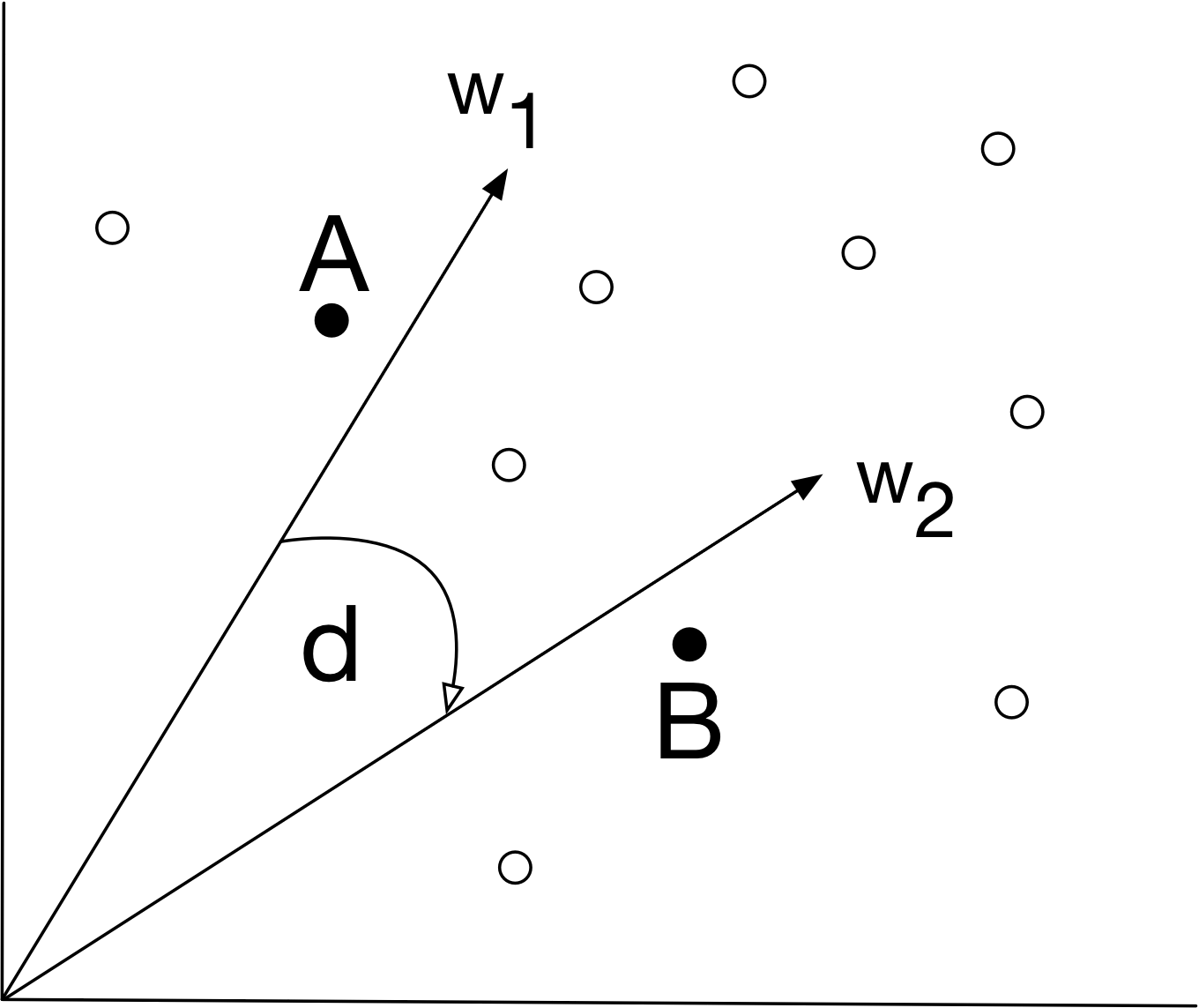}
\caption{Given a weight vector \schoolgradesWeight, the selection score of a candidate with input scores \schoolgradesValue is defined as the inner product $(\schoolgradesWeight\cdot\schoolgradesValue)\propto\|\schoolgradesValue\|\cos(\schoolgradesWeight, \schoolgradesValue)$.
Each circle in the plot corresponds to the input scores of a candidate. Two candidates A and B have input scores with same $l_2$-norm. A has higher selection score than B under score weight vector $w_1$. However, under score weight vector $w_2$, obtained by a rotation of angle \direction from $w_1$, B has higher selection score.
}
\label{fig:angles}
\end{figure}

To identify the optimal \merit policy we could use a sample of data $\{\schoolgradesValue\}$ and follow an approach similar to~\citet{asudeh2017designing}.
For a given set of applicants with input scores $\{\schoolgradesValue\}$, it is possible to partition the vector space of \schoolgradesWeight into regions such that any vector of score weights \schoolgradesWeight within the same region leads to the same ordering of applicants, where the ordering is performed by $\{\schoolgradesWeight\cdot\schoolgradesValue\}$.
However, we would have to perform an exhaustive search of such regions, which is inefficient for high-dimensional data.
%
\iffalse
\newline
\noindent\makebox[\linewidth]{\rule{\columnwidth}{0.1pt}}
\spara{Exhaustive space partitioning~\cite{asudeh2017designing}}
\begin{enumerate}
	\item partition the vector space of \schoolgradesWeight into regions as per~\cite{asudeh2017designing};
	\item for each region:
	\begin{itemize}
		\item choose one \schoolgradesWeight in that region;
	 	\item set the threshold \scoreThreshold accordingly to make the policy calibrated;
	 	\item evaluate the value of the objective function;
	 \end{itemize}
	 \item choose the \schoolgradesWeight with the optimal objective value.
\end{enumerate}
\noindent\makebox[\linewidth]{\rule{\columnwidth}{0.1pt}}
\fi
%
Instead, we experiment with a simple grid-like search strategy: starting with vector \schoolgradesWeight = \bestWeights, we rotate the score weight vector \schoolgradesWeight by small increments towards different directions \direction, to obtain new weight vectors \schoolgradesWeight' = \rotate{\schoolgradesWeight; \direction} and evaluate the objective function in every step.
The approach is described in Algorithm~\ref{algo:searchingformerit}.

\begin{algorithm}[b]
\caption{Search for \merit policy \newline
{\bf Parameters:} \{\direction\}: set of rotation directions, $k$: number of rotations in each direction}
\label{algo:searchingformerit}
/* initialize weights */ \\
\schoolgradesWeight := \bestWeights, \optWeights := \bestWeights \\
\For{each direction \direction}
{
	\For{$i = 1..k$}{
		/* rotate the vector of score weights */ \\
		\schoolgradesWeight := \rotate{\schoolgradesWeight; \direction} \\
		/* normalize the vector */ \\
		$\schoolgradesWeight := \schoolgradesWeight / |\schoolgradesWeight|_1$ \\
		/* set the threshold \scoreThreshold to make the policy calibrated */ \\
		$\scoreThreshold = \invcumul{1 - \admissionFraction}$ \\
		/* update best policy */ \\
		\If{\optFunction{\schoolgradesWeight} > \optFunction{\optWeights}}{
			\optWeights := \schoolgradesWeight
		}
	}
}
{\bf return} \optWeights
\end{algorithm}

\subsection{\bonus policy}
\label{sec:bonusunderbasic}

Let \meritPolicy be the optimal \merit policy, having parameters \bestWeights and \scoreThreshold as defined in Equations~\ref{eq:goodmerit}.
%
%We remind that \bestWeights are the linear coefficients that predict best the performance \excellence of candidates.
%
%Moreover, \meritPolicy is calibrated to admit a fraction \admissionFraction of applicants, according to the calibration condition.
%In what follows, we consider the parameters of \meritPolicy fixed.
%
We now consider a \bonus policy \bonusPolicy that favors with $\bonusValue$ points the group \sensitive = \sensitiveGroup that is {\it disadvantaged} under \meritPolicy, assuming without loss of generality that
\[
\prob{\admission = 1 | {\sensitive = \sensitiveGroup};\meritPolicy} \leq
\prob{\admission = 1 | {\sensitive \not= \sensitiveGroup};\meritPolicy}.
\]
For the moment, let us restrict \bonusPolicy to use the same score weights \bestWeights as \meritPolicy, but score threshold $\scoreThreshold_{_\bonusValue}$ that is potentially different than the threshold \scoreThreshold used by \meritPolicy.
We lift the restriction on score weights at the end of Section~\ref{sec:bonusunderbasic}.

\mpara{Calibrating $\scoreThreshold_{_\bonusValue}$.}
All policies are required to be calibrated, i.e., adhere to condition~\ref{eq:calibration}.
Towards this end, let us consider a fixed bonus $\bonusValue\geq 0$.
The lemma below guarantees that there is an algorithm to choose its threshold $\scoreThreshold_{_\bonusValue}$ so that \bonusPolicy becomes calibrated.
\begin{lemma}
The score threshold $\scoreThreshold_{_\bonusValue}$ that makes \bonusPolicy calibrated can be found via binary search over the interval $[\scoreThreshold, \scoreThreshold + \bonusValue]$.
% \note[Giorgio]{should it be -}. -- no :) MM
\label{lemma:binarysearch}
\end{lemma}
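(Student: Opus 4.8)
The plan is to write the overall selection rate of $\bonusPolicy$ as a function of its threshold, argue that this function is continuous and monotone, and then pin down a root inside $[\scoreThreshold, \scoreThreshold + \bonusValue]$ via the intermediate value theorem; monotonicity is what makes binary search valid.

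First I would fix the score weights to \bestWeights and regard the threshold as a free variable $t$, writing the induced selection rate as
\begin{align*}
R(t) = {} & \prob{\sensitive = \sensitiveGroup}\,\prob{\bestWeights\cdot\schoolgrades \geq t - \bonusValue \mid \sensitive = \sensitiveGroup} \\
& {}+ \prob{\sensitive \neq \sensitiveGroup}\,\prob{\bestWeights\cdot\schoolgrades \geq t \mid \sensitive \neq \sensitiveGroup}.
\end{align*}
This equals $\prob{\admission = 1;\bonusPolicy}$ when $\scoreThreshold_{_\bonusValue} = t$: a candidate in the disadvantaged group is selected exactly when $\bestWeights\cdot\schoolgrades + \bonusValue \geq t$, i.e.\ $\bestWeights\cdot\schoolgrades \geq t - \bonusValue$, while a candidate outside it is selected when $\bestWeights\cdot\schoolgrades \geq t$. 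Calibration (Eq.~\ref{eq:calibration}) is then the equation $R(t) = \admissionFraction$.

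Next I would note that each of the two survival probabilities is non-increasing in $t$ and, since we assumed continuous cumulative distributions, continuous in $t$; hence $R$ is continuous and non-increasing, which is exactly the property binary search needs. It remains to check that $[\scoreThreshold, \scoreThreshold + \bonusValue]$ brackets a root, where $\scoreThreshold = \invcumul{1-\admissionFraction}$ is the optimal \merit threshold. At $t = \scoreThreshold$ the non-disadvantaged term coincides with its value under \meritPolicy, while the disadvantaged term uses the lowered cutoff $\scoreThreshold - \bonusValue \leq \scoreThreshold$ and is therefore at least as large; since \meritPolicy is calibrated, $R(\scoreThreshold) \geq \admissionFraction$. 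Symmetrically, at $t = \scoreThreshold + \bonusValue$ the disadvantaged term reverts to cutoff $\scoreThreshold$ while the non-disadvantaged term uses the raised cutoff $\scoreThreshold + \bonusValue \geq \scoreThreshold$, giving $R(\scoreThreshold + \bonusValue) \leq \admissionFraction$. Continuity and the intermediate value theorem then yield a calibrated $\scoreThreshold_{_\bonusValue} \in [\scoreThreshold, \scoreThreshold + \bonusValue]$, and monotonicity guarantees binary search converges to it.

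I expect the main obstacle to be making the two endpoint comparisons rigorous rather than merely intuitive: one must decompose $R(\scoreThreshold)$ and $R(\scoreThreshold + \bonusValue)$ against the \merit selection rate group by group and confirm that adding a nonnegative bonus can only shift each per-group survival probability in the claimed direction, relying on $\bonusValue \geq 0$ and on the fact that \meritPolicy already achieves rate \admissionFraction at cutoff \scoreThreshold.
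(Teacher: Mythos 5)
Your proposal is correct and takes essentially the same route as the paper: the paper's proof likewise compares \bonusPolicy against the calibrated \meritPolicy group by group to show that any threshold below \scoreThreshold over-accepts and any threshold above $\scoreThreshold + \bonusValue$ under-accepts, and then invokes monotonicity of the acceptance rate in $\scoreThreshold_{_\bonusValue}$ to justify binary search. The only difference is presentational: you make the rate function $R(t)$, its continuity, and the intermediate value theorem explicit, whereas the paper phrases the same bracketing as an exclusion argument and leaves the continuity step implicit.
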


\mpara{Optimizing \bonusValue.}
Given $\scoreThreshold_{_\bonusValue}$ obtained with the procedure described above, and \schoolgradesWeight = \bestWeights,
how do we determine \bonusValue = \optBonusValue that optimizes the objective of Problem~\ref{problem:utilitydisparity}?
%
%We remind that \bonusPolicy is meant to favor the disadvantaged group \sensitive = \sensitiveGroup with $\bonusValue \geq 0$ points.
%
Our main observation is that we can pre-determine a narrow range of values within which the optimal $\optBonusValue$ must fall.
To be precise, let $\cumul{\score = \scoreValue|\sensitive = \sensitiveGroup; \meritPolicy}$ and $\cumul{\score = \scoreValue|\sensitive \not= \sensitiveGroup; \meritPolicy}$ be the cumulative probability functions for the score \score of the two groups under \merit policy \meritPolicy.
Let $g(\cumulValue) = \invcumul{\cumulValue|\sensitive = \sensitiveGroup; \meritPolicy}$ and $h(\cumulValue) = \invcumul{\cumulValue|\sensitive \not= \sensitiveGroup; \meritPolicy}$ be the corresponding inverse functions: given a cumulative probability value \cumulValue, they return the score \score of the \cumulValue-th applicant of the respective group.
Then the optimal $\optBonusValue$ lies in the interval given below.
\begin{lemma}
$\;\;\;\;\;\;
\optBonusValue \in  \interval = [0, {h(1 - \admissionFraction) - g(1 - \admissionFraction)}]
$
\label{lemma:rangeofbonuses}
\end{lemma}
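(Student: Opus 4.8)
The plan is to view the optimal bonus $\optBonusValue$ as the maximizer of the objective, which for the family \bonusPolicy (with fixed weights \bestWeights and calibrated threshold) reduces to a function $\objValue(\bonusValue)$ of the single variable \bonusValue, and to identify the right endpoint of \interval as exactly the bonus at which demographic parity is reached. The lower endpoint is immediate: \bonus policies are defined with a non-negative bonus, so $\optBonusValue \geq 0$ by construction. All the work is in the upper bound $\optBonusValue \leq h(1-\admissionFraction) - g(1-\admissionFraction) =: \bonusValue^\star$, which I would establish by proving (i) that $\dmd_{_\sensitive} = 0$ precisely at $\bonusValue = \bonusValue^\star$, and (ii) that raising the bonus beyond $\bonusValue^\star$ can only decrease $\objValue(\bonusValue)$.

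For step (i), I would write the two group selection rates in terms of the calibrated threshold $\scoreThreshold_{_\bonusValue}$ guaranteed by Lemma~\ref{lemma:binarysearch}. Under \bonusPolicy a candidate from the disadvantaged group is admitted iff $\bestWeights\cdot\schoolgrades \geq \scoreThreshold_{_\bonusValue} - \bonusValue$, and one from the complement iff $\bestWeights\cdot\schoolgrades \geq \scoreThreshold_{_\bonusValue}$; since \bonusPolicy shares the weights \bestWeights with \meritPolicy, the per-group raw-score distributions are exactly those whose inverses are $g$ and $h$. Demanding that both groups have selection rate \admissionFraction (which, together with calibration, is equivalent to $\dmd_{_\sensitive} = 0$) yields $\scoreThreshold_{_\bonusValue} - \bonusValue = g(1-\admissionFraction)$ and $\scoreThreshold_{_\bonusValue} = h(1-\admissionFraction)$; subtracting gives $\bonusValue = h(1-\admissionFraction) - g(1-\admissionFraction) = \bonusValue^\star$. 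A quick check confirms this threshold is calibrated (both rates equal \admissionFraction, hence so does the total), so it is indeed the threshold returned at $\bonusValue^\star$, and $\dmd_{_\sensitive}(\bonusValue^\star) = 0$.

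For step (ii), I would first show \uoa is non-increasing in \bonusValue. Because $m(\schoolgradesValue) = \someIntercept + \normalization\,\bestWeights\cdot\schoolgradesValue$ is affine in the raw weighted score, $\uoa = \someIntercept + \normalization\,\expect{\bestWeights\cdot\schoolgrades \mid \admission = 1}$, so maximizing \uoa is the same as maximizing the average raw score of the selected set. As \bonusValue grows, calibration forces the disadvantaged raw threshold $\scoreThreshold_{_\bonusValue} - \bonusValue$ down and the non-disadvantaged raw threshold $\scoreThreshold_{_\bonusValue}$ up: if both moved the same way, the total admitted fraction would drift off \admissionFraction. Hence each increment of the bonus replaces an equal number of higher-scoring non-disadvantaged candidates (raw score $\geq \scoreThreshold_{_\bonusValue}$) by lower-scoring disadvantaged ones (raw score $< \scoreThreshold_{_\bonusValue} - \bonusValue$), lowering the average selected score and so \uoa. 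Combining with step (i): for any $\bonusValue > \bonusValue^\star$ the disparity has crossed zero and $|\dmd_{_\sensitive}(\bonusValue)|$ is growing, while $\uoa(\bonusValue) \leq \uoa(\bonusValue^\star)$, so $\objValue(\bonusValue) \leq \uoa(\bonusValue) \leq \uoa(\bonusValue^\star) = \objValue(\bonusValue^\star)$, using $\dmd_{_\sensitive}(\bonusValue^\star) = 0$. Thus no bonus above $\bonusValue^\star$ beats $\bonusValue^\star$, giving $\optBonusValue \leq \bonusValue^\star$.

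The main obstacle is the monotonicity claim underlying step (ii): that the two group-specific raw thresholds move in opposite directions as \bonusValue increases. This is where the standing assumption of well-behaved, continuous densities does the work, since it makes the per-group score CDFs continuous and, where the densities are positive, strictly increasing — which is what lets me run the exchange argument for \uoa and also guarantees uniqueness of the calibrated threshold. I would dispatch the degenerate case of flat stretches in the CDFs (zero density) by noting that they only weaken the relevant inequalities from strict to non-strict, which still suffices for the stated bound.
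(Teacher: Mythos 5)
Your proof is correct and follows essentially the same approach as the paper's: identify the parity bonus $\bonusValue_{_\dmd} = h(1 - \admissionFraction) - g(1 - \admissionFraction)$ as the unique bonus at which the calibrated policy achieves $\dmd = 0$, observe that \uoa is non-increasing in \bonusValue, and conclude that no bonus beyond that point can improve the objective. If anything, your write-up is more self-contained than the paper's: the monotonicity of \uoa is merely asserted there, whereas you give an explicit exchange argument, and your chain $\optFunction{\bonusValue} \leq \uoa(\bonusValue) \leq \uoa(\bonusValue_{_\dmd}) = \optFunction{\bonusValue_{_\dmd}}$ for $\bonusValue > \bonusValue_{_\dmd}$ sidesteps the paper's additional claim that $|\dmd|$ decreases on $[0, \bonusValue_{_\dmd}]$ and increases thereafter.
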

\begin{proof}
{\it First}, notice that the following condition holds for \meritPolicy.
\begin{align}
\prob{\admission = 1 | \sensitive = \sensitiveGroup} \leq \admissionFraction \leq \prob{\admission = 1 | \sensitive \not= \sensitiveGroup}
\label{cond:inequalityofproportions}
\end{align}
This follows from the assumption that \sensitive = \sensitiveGroup is the disadvantaged group and the fact that the overall proportion \admissionFraction of accepted applicants must be between the proportions of accepted applicants of the two groups $\sensitive = \sensitiveGroup$ and $\sensitive \not= \sensitiveGroup$.
It is easy to see that the proportion $\prob{\admission = 1 | {\sensitive = \sensitiveGroup}}$ of admitted candidates of the disadvantaged group is non-decreasing for increasing $\bonusValue\geq 0$, while the corresponding proportion $\prob{\admission = 1 | {\sensitive \not= \sensitiveGroup}}$ for the advantaged group is non-increasing for increasing $\bonusValue\geq 0$.
This implies that, for increasing $\bonusValue\geq 0$, there is a bonus value $\bonusValue = \bonusValue_{_\dmd}$ for which we have equality between the quantities of Condition~\ref{cond:inequalityofproportions} and $\dmd = 0$.
This is achieved for a bonus value that makes the (1 - \admissionFraction)-th applicant of group $\sensitive=\sensitiveGroup$ have a score equal to the score of the (1 - \admissionFraction)-th applicant of group $\sensitive\not=\sensitiveGroup$.
Formally, we have
$
\bonusValue_{_\dmd} = h(1 - \admissionFraction) - g(1 - \admissionFraction)
$
and the \bonus policy with $\bonusValue=\bonusValue_{_\dmd}$ is calibrated for score threshold
$
\scoreThreshold_{\bonusValue_{_\dmd}} = \invcumul{1 - \admissionFraction|\sensitive \not= \sensitiveGroup; \meritPolicy}.
$
The above implies that the absolute value $|\dmd|$ of disparity decreases over the interval $\bonusValue\in[0, \bonusValue_{_\dmd}]$ and increases over $\bonusValue\in[\bonusValue_{_\dmd}, \infty]$.
\
{\it Second}, it is easy to see that \uoa is decreasing for increasing \bonusValue.
The two observations imply that the objective function (Problem~\ref{problem:utilitydisparity}) is optimized in  $[0, \bonusValue_{_\dmd}]$. %The lemma follows.
\end{proof}

One simple algorithm to search for \optBonusValue is grid-based search, in which we evaluate the performance of \bonus(\bonusValue) policies for uniformly distributed \bonusValue in the range $\interval$.
The calibration for each policy explored in this grid-based search can be done somewhat more efficiently based on the following generalization of Lemma~\ref{lemma:binarysearch}, which allows us to narrow the range of binary search for the threshold \scoreThreshold that makes a policy calibrated. The search algorithm for \optBonusValue is shown in Algorithm~\ref{algo:searchforbonus} (proof omitted).
\begin{lemma}
For $i\in\{1,2\}$, let $\scoreThreshold_i$ be the threshold that makes the \bonus policy with $\bonusValue_i$ bonus points calibrated, and let \scoreThreshold be the threshold for \meritPolicy.
If $\bonusValue_1 \leq \bonusValue_2$, then
$
\scoreThreshold_1 \leq \scoreThreshold_2 \leq \scoreThreshold + \bonusValue_2.
$
%\label{observe:narrowerbinarysearch}
\end{lemma}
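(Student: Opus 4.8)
The plan is to package the overall acceptance rate as a function of the two free parameters, the bonus value and the threshold, and then exploit its monotonicity in each argument separately. Writing $p = \prob{\sensitive = \sensitiveGroup}$ and letting $\cumul{\cdot \mid \sensitive = \sensitiveGroup}$ and $\cumul{\cdot \mid \sensitive \not= \sensitiveGroup}$ denote the cumulative distributions of the base score $\bestWeights \cdot \schoolgrades$ within the two groups, the fraction of accepted applicants under a \bonus policy with bonus $\bonusValue$ and threshold $t$ is
\[
A(\bonusValue, t) = p\,\bigl(1 - \cumul{t - \bonusValue \mid \sensitive = \sensitiveGroup}\bigr) + (1-p)\,\bigl(1 - \cumul{t \mid \sensitive \not= \sensitiveGroup}\bigr),
\]
since a disadvantaged applicant is accepted exactly when $\bestWeights \cdot \schoolgrades \geq t - \bonusValue$ and every other applicant exactly when $\bestWeights \cdot \schoolgrades \geq t$. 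Calibration of the policy with bonus $\bonusValue_i$ is then precisely the identity $A(\bonusValue_i, \scoreThreshold_i) = \admissionFraction$.

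From this closed form I would read off the two monotonicities needed. First, $A(\bonusValue, t)$ is non-increasing in $t$ for fixed $\bonusValue$, because raising $t$ raises both effective thresholds and so can only drop applicants; under the paper's assumption of continuous, well-behaved densities it is in fact strictly decreasing on the region carrying positive probability, which is what renders the calibrating threshold unique. Second, $A(\bonusValue, t)$ is non-decreasing in $\bonusValue$ for fixed $t$, because increasing $\bonusValue$ lowers the effective threshold $t - \bonusValue$ for the disadvantaged group, admitting weakly more of them, while leaving the advantaged group untouched.

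To prove $\scoreThreshold_1 \leq \scoreThreshold_2$, I would compare the two calibration identities through the intermediate quantity $A(\bonusValue_2, \scoreThreshold_1)$. Since $\bonusValue_1 \leq \bonusValue_2$, monotonicity in the bonus gives $A(\bonusValue_2, \scoreThreshold_1) \geq A(\bonusValue_1, \scoreThreshold_1) = \admissionFraction = A(\bonusValue_2, \scoreThreshold_2)$. Applying monotonicity of $A(\bonusValue_2, \cdot)$ in the threshold, the inequality $A(\bonusValue_2, \scoreThreshold_1) \geq A(\bonusValue_2, \scoreThreshold_2)$ forces $\scoreThreshold_1 \leq \scoreThreshold_2$, because a strictly decreasing function attains a larger value only at a smaller argument. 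The upper bound $\scoreThreshold_2 \leq \scoreThreshold + \bonusValue_2$ then requires no new work: it is exactly the right endpoint of the calibration interval $[\scoreThreshold, \scoreThreshold + \bonusValue_2]$ furnished by Lemma~\ref{lemma:binarysearch} applied with bonus $\bonusValue_2$. The same lemma applied to $\bonusValue_1$ additionally yields $\scoreThreshold \leq \scoreThreshold_1$, so the full chain $\scoreThreshold \leq \scoreThreshold_1 \leq \scoreThreshold_2 \leq \scoreThreshold + \bonusValue_2$ is available if desired.

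The only delicate point — and the one I expect to be the crux — is the passage from an inequality between acceptance rates to an inequality between thresholds. That step is legitimate only when the calibrating threshold is unique, i.e. when $A(\bonusValue_2, \cdot)$ is strictly rather than merely weakly decreasing across the relevant range; a degenerate base score with an atom or a flat interval of zero density would break uniqueness and let the two thresholds be chosen in either order. The standing assumption of continuous cumulative distributions with well-behaved densities rules these pathologies out, so the monotone inversion is valid and the argument closes cleanly.
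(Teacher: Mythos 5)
Your proof is correct. There is, in fact, nothing in the paper to compare it against: this lemma is stated with ``(proof omitted),'' so your argument fills a genuine gap rather than paralleling an existing one. What you do is the natural generalization of the reasoning the paper uses (only in suppressed, commented-out form) to justify Lemma~\ref{lemma:binarysearch}: there, the key facts are that the admitted fraction is non-increasing in the threshold and that the calibrating threshold is bracketed in $[\scoreThreshold, \scoreThreshold+\bonusValue]$. You make the acceptance rate $A(\bonusValue, t)$ explicit as a function of \emph{both} parameters, add the second monotonicity (non-decreasing in \bonusValue at fixed $t$, since only the disadvantaged group's effective threshold $t-\bonusValue$ moves), and chain the two calibration identities through the intermediate value $A(\bonusValue_2, \scoreThreshold_1)$ to obtain $\scoreThreshold_1 \leq \scoreThreshold_2$; the upper bound $\scoreThreshold_2 \leq \scoreThreshold + \bonusValue_2$ you correctly delegate to Lemma~\ref{lemma:binarysearch} applied with bonus $\bonusValue_2$, which is the right division of labor since that lemma is already available at this point in the paper. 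Your closing caveat is also the right one to flag: passing from $A(\bonusValue_2,\scoreThreshold_1) \geq A(\bonusValue_2,\scoreThreshold_2)$ to $\scoreThreshold_1 \leq \scoreThreshold_2$ requires strict monotonicity of $A(\bonusValue_2,\cdot)$ over the relevant range (equivalently, uniqueness of the calibrating threshold), and this is exactly what the paper's standing assumption of continuous, well-behaved score distributions is meant to supply; with that assumption in place, the argument is complete.
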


\begin{algorithm}[t]
\caption{Grid-based search for \optBonusValue\newline
{\bf Parameters:} $k$: granularity}
\label{algo:searchforbonus}
$\epsilon = \bonusValue_{_\dmd} / k$ \\
\optBonusValue := 0 \\
% $v_{_{OPT}} := \optFunction{\optBonusValue}$ \\
\For{i = 1..k}
{
	$\bonusValue = i\cdot\epsilon$ \\
	$\scoreThreshold_{i} := \binarysearch{\scoreThreshold_{i-1}, \scoreThreshold + \bonusValue}$ \\
	$v$ := $\optFunction{\bonusValue, \scoreThreshold_{_\bonusValue}}$ \\
	\If{$v > v_{_{OPT}}$}{
		\optBonusValue := \bonusValue;\ \  $v_{_{OPT}} := v$
	}
}
\bf{return} \optBonusValue
\end{algorithm}

\mpara{Solving Problem~\ref{problem:utilitydisparity}}.
So far in Section~\ref{sec:bonusunderbasic}, we have constrained ourselves to \bonus policies that adopt fixed weights \schoolgradesWeight = \bestWeights.
This raises the question: how do we choose \schoolgradesWeight jointly with other parameters to find the optimal bonus policy for Problem~\ref{problem:utilitydisparity}?
The answer is that we can simply use \schoolgradesWeight = \bestWeights, the optimal score weights for a \merit policy with $\lambda = 0$, as stated in following lemma.
\begin{lemma}
There is an optimal bonus policy with \schoolgradesWeight = \bestWeights.
\end{lemma}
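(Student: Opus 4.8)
The plan is to decouple the two terms of the objective $\optFunction{\scoringFunctionInstance,\admissionFunctionInstance} = \uoa - \lambda|\dmd_{_\sensitive}|$ and argue that, for every achievable value of the disparity, the weights \bestWeights already attain the best possible utility. First I would record two structural facts. \emph{(i)} For any \bonus policy the disparity is a function of the two group selection rates $\admissionFraction_1 = \prob{\admission=1|\sensitive=\sensitiveGroup}$ and $\admissionFraction_2 = \prob{\admission=1|\sensitive\not=\sensitiveGroup}$ alone, namely $\dmd_{_\sensitive} = \admissionFraction_1 - \admissionFraction_2$; combined with the calibration identity $\mass{\sensitive=\sensitiveGroup}\,\admissionFraction_1 + \mass{\sensitive\not=\sensitiveGroup}\,\admissionFraction_2 = \admissionFraction$, the pair $(\admissionFraction_1,\admissionFraction_2)$ is \emph{uniquely determined} by the value of $\dmd_{_\sensitive}$. \emph{(ii)} A \bonus policy with weights \bestWeights admits, in each group, the candidates whose base score $\bestWeights\cdot\schoolgradesValue$ exceeds a per-group threshold (the disadvantaged threshold being lower by exactly \bonusValue); since $m(\schoolgradesValue)=\someIntercept+\normalization\,\bestWeights\cdot\schoolgradesValue$ is increasing in $\bestWeights\cdot\schoolgradesValue$, it therefore retains within each group precisely its top-$m$ members. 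Because for any fixed fraction of a group the conditional mean $\expect{\excellence\mid\admission=1}$ is maximized by keeping the highest-$m$ members, fact \emph{(ii)} yields: whenever a \bestWeights-\bonus policy realizes given per-group rates $(\admissionFraction_1,\admissionFraction_2)$, it attains the maximum utility among all selection rules with those rates.

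With these facts the reduction is immediate. I would take an arbitrary \bonus policy \bonusPolicy with weights \schoolgradesWeight, rates $(\admissionFraction_1,\admissionFraction_2)$, and disparity $\dmd_{_\sensitive}$, and construct a \bestWeights-\bonus policy $\bonusPolicy'$ with the \emph{same} disparity. By fact \emph{(i)} it then has the same rates $(\admissionFraction_1,\admissionFraction_2)$, hence the identical penalty $\lambda|\dmd_{_\sensitive}|$, while by fact \emph{(ii)} its utility satisfies $\uoa(\bonusPolicy')\geq\uoa(\bonusPolicy)$. Thus $\optFunction{\bonusPolicy'}\geq\optFunction{\bonusPolicy}$, and applying this to an optimal \bonusPolicy shows an optimum is attained with weights \bestWeights.

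The hard part is ensuring $\bonusPolicy'$ is a \emph{valid} \bonus policy, i.e. that the bonus it requires is nonnegative, and I would handle this by casing on $\dmd_{_\sensitive}(\bonusPolicy)$ relative to the disparity $\dmd_{_\sensitive}(\meritPolicy)\leq 0$ of the \merit policy. Varying \bonusValue over $[0,\infty)$ with weights fixed to \bestWeights, the disparity is continuous and, by the monotonicity established in the proof of Lemma~\ref{lemma:rangeofbonuses}, nondecreasing, running from $\dmd_{_\sensitive}(\meritPolicy)$ at $\bonusValue=0$ up through $0$ and toward a limiting value $\dmd_{_\sensitive}^{\max}$ that (by calibration, as $\admissionFraction_1\to 1$) does not depend on the weights; so by the intermediate value theorem every target in $[\dmd_{_\sensitive}(\meritPolicy),\dmd_{_\sensitive}^{\max})$ is realized by some $\bonusValue'\geq 0$, giving the required valid $\bonusPolicy'$. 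The only remaining case is $\dmd_{_\sensitive}(\bonusPolicy)<\dmd_{_\sensitive}(\meritPolicy)$, a policy even more skewed against the disadvantaged group than \meritPolicy; such a policy is dominated by \meritPolicy itself, since \meritPolicy selects the globally top-$m$ candidates and hence has maximal utility $\uoa^{\star}\geq\uoa(\bonusPolicy)$ while $|\dmd_{_\sensitive}(\bonusPolicy)|>|\dmd_{_\sensitive}(\meritPolicy)|$, whence $\optFunction{\bonusPolicy}<\optFunction{\meritPolicy}$. Combining the cases, every \bonus policy is matched or dominated by a \bestWeights-\bonus policy, which proves the lemma; the nonnegativity constraint is the subtle point that forces this two-case split.
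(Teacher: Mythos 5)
Your proof is correct and follows essentially the same route as the paper: match the disparity of an arbitrary \bonus policy by a policy with score weights \bestWeights, then argue that the latter attains at least the same utility and hence at least the same objective value. The paper compresses the two points you work out in detail --- it simply asserts the utility comparison (your facts \emph{(i)}--\emph{(ii)}, which correctly justify it via calibration pinning down the per-group rates and within-group top-selection under $m$), and it sidesteps your nonnegativity case split by letting the matching bonus range over $(-\infty,+\infty)$, whereas your domination-by-\meritPolicy argument handles the constraint $\bonusValue \geq 0$ explicitly and is, if anything, the more careful treatment.
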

\begin{proof}
{\it First}, consider two \bonus policies $\bonusPolicy_{1}$ and $\bonusPolicy_{2}$ with the following properties: (i) $\bonusPolicy_{1}$ and $\bonusPolicy_{2}$ are associated with the same value of $|\dmd|$; (ii) $\bonusPolicy_{1}$ uses score weights \schoolgradesWeight = \bestWeights, while $\bonusPolicy_{2}$ does not.
Then, $\bonusPolicy_{1}$ is associated with higher \uoa and therefore higher objective value than $\bonusPolicy_{2}$.
{\it Second}, notice that for well-behaved score distributions (specifically, for cumulative score distributions $\cumul{\score = \scoreValue|\sensitive = \sensitiveGroup; \meritPolicy}$ and $\cumul{\score = \scoreValue|\sensitive \not= \sensitiveGroup; \meritPolicy}$ that are continuous), it is possible to choose an appropriate bonus value $\bonusValue\in(-\infty, +\infty)$ to achieve {\it any} value of $\dmd\in(-1, 1)$, irrespective of score weights \schoolgradesWeight.
Combined, the above two arguments imply that for any \bonus policy $\bonusPolicy_{2}$ with $\schoolgradesWeight\not=\bestWeights$, we can build a policy with same \dmd and \schoolgradesWeight=\bestWeights, which is associated with higher objective value. %, proving the lemma.
\end{proof}

\subsection{\quota policy}

We find that there is a relationship of equivalence between \bonus and \quota policies, as pointed out in the following lemma.
\begin{lemma}
For a given \bonus policy, there is a \quota policy that leads to the selection of exactly the same candidates, and vice versa.
\end{lemma}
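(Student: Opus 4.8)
The plan is to prove the equivalence by a direct threshold-matching (change-of-variables) argument. The key observation is that both policy families decide each candidate by comparing the \emph{same} linear raw score $\schoolgradesWeight\cdot\schoolgradesValue$ against a group-dependent cutoff; since a \quota policy and the \bonus policy we build use identical weights \schoolgradesWeight, their raw scores coincide candidate-by-candidate, and the only thing that differs is how the cutoff is expressed. Thus the proof reduces to matching cutoffs.

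First I would handle the forward direction (\bonus $\to$ \quota). Take a \bonus policy with weights \schoolgradesWeight, bonus $\bonusValue\geq 0$ on the disadvantaged group $\sensitive=\sensitiveGroup$, and threshold $\scoreThreshold_{_\bonusValue}$. A candidate in group $\sensitive=\sensitiveGroup$ is admitted iff $\schoolgradesWeight\cdot\schoolgradesValue + \bonusValue \geq \scoreThreshold_{_\bonusValue}$, i.e.\ iff $\schoolgradesWeight\cdot\schoolgradesValue \geq \scoreThreshold_{_\bonusValue} - \bonusValue$, while a candidate in group $\sensitive\not=\sensitiveGroup$ is admitted iff $\schoolgradesWeight\cdot\schoolgradesValue \geq \scoreThreshold_{_\bonusValue}$. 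This is exactly the two-threshold rule of Equation~\ref{eq:quotaadmission} applied to the same raw scores, with $\scoreThreshold_{_{(\sensitive = \sensitiveGroup)}} := \scoreThreshold_{_\bonusValue} - \bonusValue$ and $\scoreThreshold_{_{(\sensitive \not= \sensitiveGroup)}} := \scoreThreshold_{_\bonusValue}$. Hence the induced \quota policy admits precisely the same candidates; its realized quota $\quotaValue$ is just the fraction of admitted candidates from the disadvantaged group, which equals that of the \bonus policy, so it is a bona fide \quota policy (Equation~\ref{eq:quota}) selecting the same set.

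Then I would prove the converse (\quota $\to$ \bonus). Given a \quota policy with thresholds $\scoreThreshold_{_{(\sensitive = \sensitiveGroup)}}$ and $\scoreThreshold_{_{(\sensitive \not= \sensitiveGroup)}}$, set $\scoreThreshold_{_\bonusValue} := \scoreThreshold_{_{(\sensitive \not= \sensitiveGroup)}}$ and $\bonusValue := \scoreThreshold_{_{(\sensitive \not= \sensitiveGroup)}} - \scoreThreshold_{_{(\sensitive = \sensitiveGroup)}}$. Substituting these into the \bonus selection rule recovers the two quota thresholds verbatim, so the \bonus policy admits the same candidates. Here lies the only genuine subtlety I would flag explicitly: this bonus is nonnegative exactly when $\scoreThreshold_{_{(\sensitive = \sensitiveGroup)}} \leq \scoreThreshold_{_{(\sensitive \not= \sensitiveGroup)}}$, i.e.\ when the quota genuinely favors the disadvantaged group, which is precisely the affirmative-action regime we consider (a quota that penalized the protected group would correspond to a negative bonus, so the statement is understood within the setting where $\bonusValue\geq 0$).

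The main obstacle is bookkeeping rather than mathematics: confirming that the direction of the inequality between the two quota thresholds aligns with the sign constraint $\bonusValue\geq 0$ on affirmative-action bonuses, and ensuring that ``exactly the same candidates'' is well-defined. The latter is covered by the standing assumption that the score distributions are continuous, so a cutoff achieving a prescribed admitted fraction exists and the boundary set $\{\schoolgradesWeight\cdot\schoolgradesValue = \scoreThreshold\}$ carries zero probability; consequently ties do not affect the selected set, and calibration (Equation~\ref{eq:calibration}) is inherited automatically because the two selected sets are literally identical candidate-by-candidate.
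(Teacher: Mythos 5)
Your proof is correct, but it takes a genuinely different and more direct route than the paper. The paper's proof works probabilistically: it computes the per-group admission rates $\sigma_{\sensitive=\sensitiveGroup}$, $\sigma_{\sensitive\not=\sensitiveGroup}$ induced by the \bonus policy, and then defines the \quota thresholds as the inverse-CDF values $\invcumul{1-\sigma | \sensitive}$ within each group, so that the \quota policy reproduces those rates; equality of the selected \emph{sets} then follows because both policies admit the top scorers within each group. Your proof instead observes that the \bonus rule is \emph{already} a two-threshold rule in disguise: a candidate with $\sensitive=\sensitiveGroup$ passes iff $\schoolgradesWeight\cdot\schoolgradesValue \geq \scoreThreshold_{_\bonusValue} - \bonusValue$, so setting $\scoreThreshold_{_{(\sensitive=\sensitiveGroup)}} := \scoreThreshold_{_\bonusValue} - \bonusValue$ and $\scoreThreshold_{_{(\sensitive\not=\sensitiveGroup)}} := \scoreThreshold_{_\bonusValue}$ makes the two decision functions literally identical, and the converse is the same substitution run backwards. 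This buys you two things the paper's argument does not: the equivalence holds pointwise as an identity of selection rules, with no appeal to continuity or invertibility of the score CDFs (the paper's construction needs well-behaved distributions for $\invcumul{\cdot}$ to recover the correct cutoff), and the reverse direction is fully symmetric rather than deferred to ``a similar argument.'' Your flagged subtlety about the sign of \bonusValue is reasonable but not a real obstruction: the paper itself allows $\bonusValue\in(-\infty,+\infty)$ in its analysis of \bonus policies, and in the affirmative-action regime $\scoreThreshold_{_{(\sensitive=\sensitiveGroup)}} \leq \scoreThreshold_{_{(\sensitive\not=\sensitiveGroup)}}$ the bonus you construct is nonnegative anyway.
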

\begin{proof}
Let $\prob{\admission = 1, \sensitive = \sensitiveGroup}$ be the probability admitting applicants of group \sensitive = \sensitiveGroup under a \bonus policy with bonus value \bonusValue.
Let also $\quotaValue_\bonusValue$ be such that
\begin{align*}
\prob{\admission = 1, \sensitive = \sensitiveGroup} & = \quotaValue_\bonusValue\ \admissionFraction \\
\prob{\admission = 1, \sensitive \not= \sensitiveGroup} & = (1 - \quotaValue_\bonusValue)\ \admissionFraction.
\end{align*}
and define the probabilities
\begin{align*}
\sigma_{\sensitive = \sensitiveGroup} = \prob{\admission = 1\ |\ \sensitive = \sensitiveGroup} = \frac{\quotaValue_\bonusValue\ \admissionFraction}{\prob{\sensitive = \sensitiveGroup}} \\
\sigma_{\sensitive \not= \sensitiveGroup} = \prob{\admission = 1\ |\ \sensitive  \not= \sensitiveGroup} = \frac{(1 - \quotaValue_\bonusValue)\ \admissionFraction}{\prob{\sensitive \not= \sensitiveGroup}}.
\end{align*}
We can construct a \quota policy with a quota $\quotaValue = \quotaValue_\bonusValue$ for the disadvantaged group (\sensitive = \sensitiveGroup), by setting
\begin{align*}
\scoreThreshold_{_{(\sensitive     = \sensitiveGroup)}} = \invcumul{1 - \sigma_{\sensitive     = \sensitiveGroup} | ; \sensitive     = \sensitiveGroup} \\
\scoreThreshold_{_{(\sensitive \not= \sensitiveGroup)}} = \invcumul{1 - \sigma_{\sensitive \not= \sensitiveGroup} | ; \sensitive \not= \sensitiveGroup},
\end{align*}
as per the definition of \quota policies in Section~\ref{sec:admissionpolicies}.
By construction, we have a \quota policy that accepts exactly the same applicants as the given \bonus policy. A similar argument can be made for the opposite direction.
\end{proof}
This lemma allows us to focus our empirical study (Section~\ref{sec:dataanalysis}) mainly on \merit and \bonus policies.

\section{Policy search for multiple sensitive attributes}
\label{sec:multiplesensitiveattributes}

Our discussion so far has focused on policies that target one sensitive attribute.
In this section, we introduce the problem of policy design for multiple sensitive attributes $\{\sensitive_i\}$.
% Like before, we assume that there are two groups associated with each sensitive attribute, namely $(\sensitive_i = \sensitiveGroup_i)$ and $(\sensitive_i \not= \sensitiveGroup_i)$.

\begin{problem}
For sensitive properties $\sensitive_1, \sensitive_2, \ldots, \sensitive_m$, each partitioned into two associated groups of applicants $(\sensitive_i=\sensitiveGroup_i)$ and $(\sensitive_i\not=\sensitiveGroup_i)$,
define \scoringFunction{\sensitive_i, \schoolgrades} and \admissionFunction{\sensitive, \score} so as to maximize
\begin{equation*}
\optFunction{\scoringFunctionInstance,\admissionFunctionInstance} = \uoa(\scoringFunctionInstance,\admissionFunctionInstance)
- \sum\lambda_i\ |\dmd_{_\sensitive}(\scoringFunctionInstance,\admissionFunctionInstance)|
,\;\;\lambda_i \geq 0.
\end{equation*}
\label{problem:multiplesensitiveattributes}
\end{problem}

\mpara{Algorithms.}
For \merit policies, Algorithm~\ref{algo:searchingformerit} is directly extensible to the setting of multiple sensitive attributes.

For \bonus policies, our goal is to assign bonuses $\bonusValue_1$, $\bonusValue_2$, $\ldots$, $\bonusValue_m$ to the respective disadvantaged groups $(\sensitive_i=\sensitiveGroup_i)$.
One approach is, similarly to Algorithm~\ref{algo:searchforbonus}, to perform a grid search over possible bonus values for each sensitive attribute.
However: \begin{inparaenum}[(i)]\item Lemma~\ref{lemma:binarysearch} does not extend directly to multiple dimensions, and so does not binary search; \item grid search at fine granularity and multiple dimensions is inefficient\end{inparaenum}.
Instead, we employ an incremental algorithm that works as follows: \begin{inparaenum}[(i)]\item at every step, the algorithm maintains bonus values $\bonusValue_1$, $\bonusValue_2$, $\ldots$, $\bonusValue_m$ for each sensitive attribute, \item at every step, it greedily increments  by an increment $\bonusInc$ the bonus value $\bonusValue_i$ of the attribute $\sensitive_i$ which leads to the best objective value\end{inparaenum}.

For \quota policies, the equivalence with \bonus extends to multiple sensitive attributes (proof omitted), and so we do not explore them further here.

%!TeX root = admissions.tex

\section{Application}
\label{sec:dataanalysis}

In what follows, we present a large real-world dataset of university admissions (Section~\ref{sec:data}), build a causal model on it (Section~\ref{sec:building}), and evaluate the performance of affirmative action policies on the data (Section~\ref{sec:evaluation}).
In addition, we compare with two baseline approaches: (i) the \fair algorithm of \citet{zehlike2017fa}, and (ii) the \median algorithm of \citet{feldman2015certifying} and highlight interesting connections among them (Section~\ref{sec:evaluation}).

\subsection{Data}
\label{sec:data}

%In this section, we describe the datasets over which we carry out the empirical analysis.
%
We analyze anonymized data about university admissions from %an OECD country (details withheld for double-blind review).
Chile.
In Chile, admission to undergraduate programmes is typically based on standardized tests, administered by a public entity, on language, mathematics, natural sciences, and human sciences, plus their grades in high school. The latter are converted to the same range of values as the standardized tests using a simple formula. Together, test results and grades form the input scores.
%
% Our names <- Official names used in Chile
% "natural sciences" <- science
% "human sciences" <- history and social sciences

Universities announce their degrees (such as engineering, medicine, or law), and coefficients to convert the input scores into an admission score.
%
%Grades, mathematics, and language always have a non-zero coefficient, while the remaining tests might be present or not with a non-zero coefficient in the formula.
%
For instance, the engineering school we analyze below uses 30\% grades, 10\% language, 45\% mathematics, 15\% natural sciences, and 0\% human sciences. At the same university, the law school instead uses 40\% grades, 25\% language, 10\% mathematics, 0\% natural sciences, and 25\% human sciences.
\iffalse % We can skip this for now, as we do not use the data
%
Additionally, each degree is associated with a number of vacancies for applicants achieving the highest admission scores, and a few vacancies for additional programmes benefiting underrepresented groups or exceptional athletes.
\fi
After learning their test results, candidates rank degrees by decreasing order of preference. Then, applicants for each degree are ranked by admission score and the top ones are admitted.
% Once a student is admitted at one of his/her preferred degrees, the remaining preferences s/he has are discarded.

We have access to two datasets from this admission process.
The first corresponds to \emph{national-level data} about students who took the standardized tests; the second corresponds to \emph{university-level data} of students admitted to one of the most sought-after engineering degrees in one of the largest universities (95\% of the admitted applicants enroll for this degree).
The national-level data is provided by state authorities in anonymized form under a research agreement\footnote{
	Department of Measurement, Evaluation, and Registry for Education (DEMRE).
	Data requests through~\url{https://psu.demre.cl/portales/portal-bases-datos}.
}.
This dataset includes the gender, high school type (public or private), income decile, grades in high school, and results in standardized tests of about $3,500$ applicants who gave the test in 2017 and included the engineering degree of our university-level data among their preferences.
The university-level data is similarly provided by university authorities in anonymized form specifically for research on discrimination and bias.
For the university-level data, we have access to the same variables, except for income decile, for about $3,000$ admitted applicants from 2010 to 2014.
Most importantly, the university-level data contains the grades obtained by admitted students during their first year at the university.
We note that the period of the two datasets do not overlap, but due to standardization of grades and test scores, the distributions of input scores do not differ significantly across years.
%To preserve the anonymity of the students, no attempt of joining the national-level and university-level data was made.

\begin{figure}
\includegraphics[width=0.85\columnwidth]{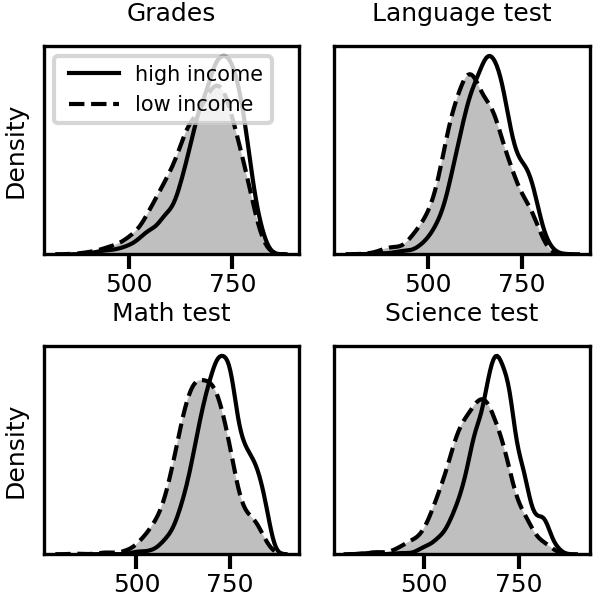}
\caption{Differences in input scores by income.}
\label{fig:incomegrades}
\end{figure}

% \includegraphics[width=1.85\columnwidth]{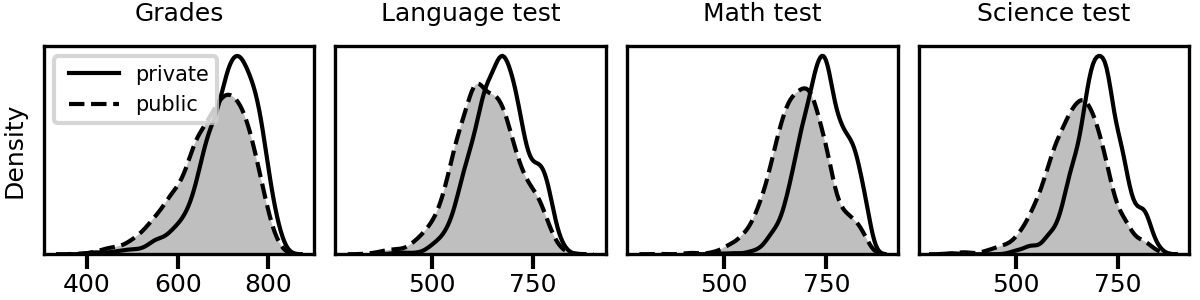}
% \caption{Differences in input scores by school type.}
% \label{fig:schooltypegrades}

\begin{figure}
\includegraphics[width=0.85\columnwidth]{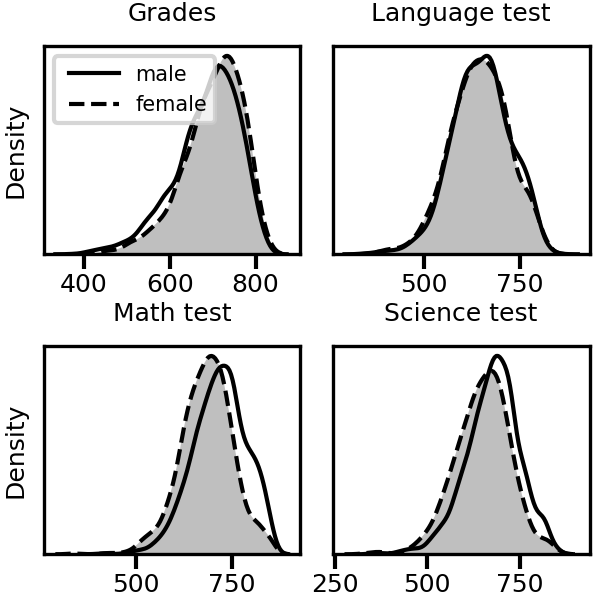}
\caption{Differences in input scores by gender.}
\label{fig:gendergrades}

\end{figure}

\subsection{Building a causal model}
\label{sec:building}

The causal model (see Figure~\ref{fig:basicdiagram}) is built from the two datasets we have at our disposal.
In the causal model, there are two conditional relationships that need to be set to implement an admission policy, namely $(\sensitive, \schoolgrades)\rightarrow\score$ and $(\score, \sensitive)\rightarrow\admission$, and two conditional relationships that need to be learnt from data, namely,
\begin{align}
	\sensitive \rightarrow \schoolgrades & : \prob{\schoolgrades\ |\ \sensitive} \label{eq:gradesconditional}\\
	\schoolgrades, \admission = 1 \rightarrow \excellence & : \prob{\excellence\ |\ \admission = 1, \schoolgrades}. \label{eq:excellenceconditional}
\end{align}
Both probabilistic quantities above can be extracted directly from the data (e.g., by fitting a statistical model on the involved attributes).
Here is a subtle point that needs to be addressed for \prob{\excellence\ |\ \admission = 1, \schoolgrades} (Expr.~\ref{eq:excellenceconditional}): if we learn a model for it from the data of students who were admitted (\admission = 1) according to a specific admission policy, can we use it to make predictions for students who {\it would be admitted} under a different policy -- or are the observed data somehow {\it biased} by the decisions of the admission policy that generated the data?
The answer is provided by causality theory~\cite{bookofwhy}, from which we deduce that we can, indeed, use the model we learn from the data to make predictions under different admission policies.
The reason is that attribute \schoolgrades is an appropriate `control' variable for Expr.~\ref{eq:excellenceconditional} (more formally, \schoolgrades blocks all back-door paths from \admission to \excellence; for an intuitive explanation see~\cite{bookofwhy}, Chapter 7).

\iffalse
\spara{Bias in observations.}
%
In general, we should not expect to predict accurately \excellence for applicants with \schoolgrades values that are far from the observed ones.
%
This is potentially a problem in our case, as the students whose performance we observe are those admitted in this highly-selective school, who tend to have high input scores \schoolgrades that are higher than those of the average candidate.
%
Nevertheless, as we will see later in this section, the alternative policies that we explore in the rest of the section do not suffer from this problem, as they do not lead to a dramatically different cohort of accepted applicants compared to the current policy for which the data were observed.
%
This is because most of the observed differences across policies involve applicants that are close to the decision margin.
\fi

In what follows, we use the attributes of gender (male or female), high school type (public or private), and income (above or below the median) as sensitive attributes \sensitive, with the population split in the two groups in parentheses; the values of grades, language test, math test, and natural sciences test as input scores \schoolgrades; and the sum of grades during first year at university as \excellence (weighted by the credits of each subject; all first-year students take the same subjects).

For Expr.~\ref{eq:gradesconditional}, we use directly the distributions extracted from the data, as shown in Figures~\ref{fig:incomegrades}-\ref{fig:gendergrades} (the plots for school type are omitted here and whenever they are similar to those for income).
%
%The density values shown in the plots were produced by the {\it seaborn}\footnote{\url{https://seaborn.pydata.org/}} visualization library.
%
As we see, the two groups for income (and similarly for school type) exhibit visibly different distributions (Fig.~\ref{fig:incomegrades}), while the distributions differ less for gender (Fig.~\ref{fig:gendergrades}).
This suggests that, for the admission system that we study here, there is a stronger case for affirmative action policies that target financial attributes than gender, and the enactment of such policies has larger potential for impact.

\begin{figure}
\begin{center}
	\includegraphics[width=.8\columnwidth]{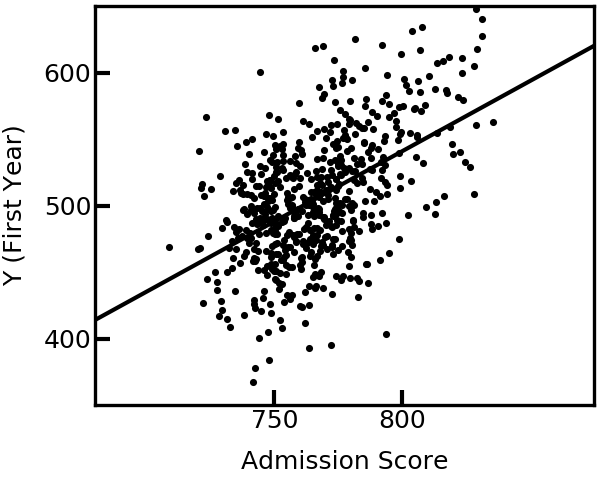}
\end{center}
\caption{First year performance vs admission score of the optimal \merit policy with $\lambda = 0$. For clarity, the plot shows only a 20\% sample of the data points, stratified by \excellence.}
\label{fig:linearfyt}
\end{figure}

For Expr.~\ref{eq:excellenceconditional}, we learn the best linear weights \bestWeights for a \merit policy that targets only \uoa (i.e., considers $\lambda = 0$) -- they are $47.6\%$ grades, $1.69\%$ language, $23.8\%$ mathematics, and $26.9\%$ natural sciences.
As discussed in Section~\ref{sec:algorithmsformerit}, these weights also correspond to the linear function that expresses the expected performance \excellence of a selected applicant with given input scores.
Figure~\ref{fig:linearfyt} shows the actually observed values of admission scores \score and performance \excellence of admitted students during their first year of studies.
There is a correlation between the two, which supports our choice of a linear function to model the expectation
$m(\schoolgradesValue) = \expect{\excellence | \schoolgrades = \schoolgradesValue, \admission = 1}$ (see also beginning of Section~\ref{sec:algorithmsformerit}).

Note also that there is variance in the observed values of \excellence for a given score value.
Such variance does not affect the task of policy design in the setting we described in Section~\ref{sec:setting}, as utility \uoa is defined in terms of expected performance \excellence.
Generally speaking, however, large variance in \excellence could trigger political arguments in favor or against affirmative action policies.
On one hand, large variance might indicate that, instead of affirmative action, policy design should focus on the design of tests, the values \schoolgrades of which ought to be better predictors of future performance. % (e.g., perhaps the test for mathematics should be re-designed to better distinguish different levels in the analytical aptitude of students).
On the other hand, variance that is too large compared to the expected performance of applicants might also support arguments in favor of affirmative action policies: if we cannot predict accurately who is going to do well at university, then we should not rely too much on test scores, but instead provide more opportunities to applicants from disadvantaged backgrounds.

Going back to our data, the distributions for admission score with weights \bestWeights are shown in Figure~\ref{fig:scoredistribution}.
It becomes obvious that, for the \merit policy that targets only \uoa, high-income applicants have an advantage over low-income applicants (similarly for private-school applicants against public-school applicants), in the sense that they are admitted by larger proportions.
Male applicants have an advantage over female applicants, but the advantage is smaller than in the case of income or school type, consistently with our earlier observations.

\begin{figure}
\begin{center}

\includegraphics[width=0.95\columnwidth]{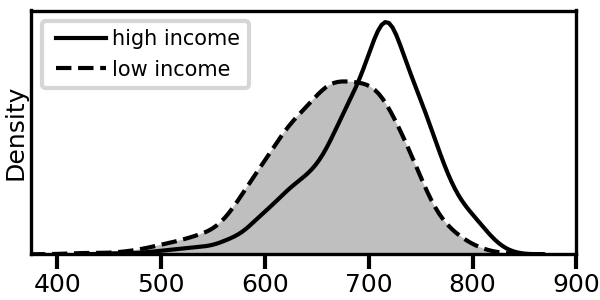}

\includegraphics[width=0.95\columnwidth]{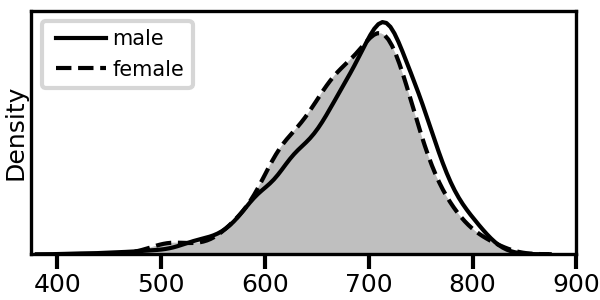}
\end{center}
\caption{Distribution of admission scores for the optimal \merit policy with $\lambda = 0$.}
\label{fig:scoredistribution}
\end{figure}

\iffalse

\begin{table}
\caption{\merit}
\label{table:weights}
\begin{tabular}{c|c|c|c|c}
\hline
$\lambda$ & Grades & Language & Math & Science \\
\hline
$0$ & 47.6\% & 1.69\% & 23.8\% & 26.9\% \\
\hline
\end{tabular}
\end{table}

\begin{table}
\caption{\merit}
\label{table:weights}
\begin{tabular}{c|c|c|c|c|c}
\hline
$\lambda$ & \sensitive & Grades & Language & Math & Science \\
\hline
$0$ & (none) & 47.6\% & 1.69\% & 23.8\% & 26.9\% \\
\hline
$100$ & income	& 49\% & 0.69\% & 9.8\% & 41\% \\
$100$ & school\_type	& 68\% & 1\% & 15\% & 17\% \\
$100$ & gender	& 68\% & 1\% & 15\% & 16\% \\
\hline
\end{tabular}
\end{table}
\fi

\begin{figure}
\begin{center}
\includegraphics[width=.95\columnwidth]{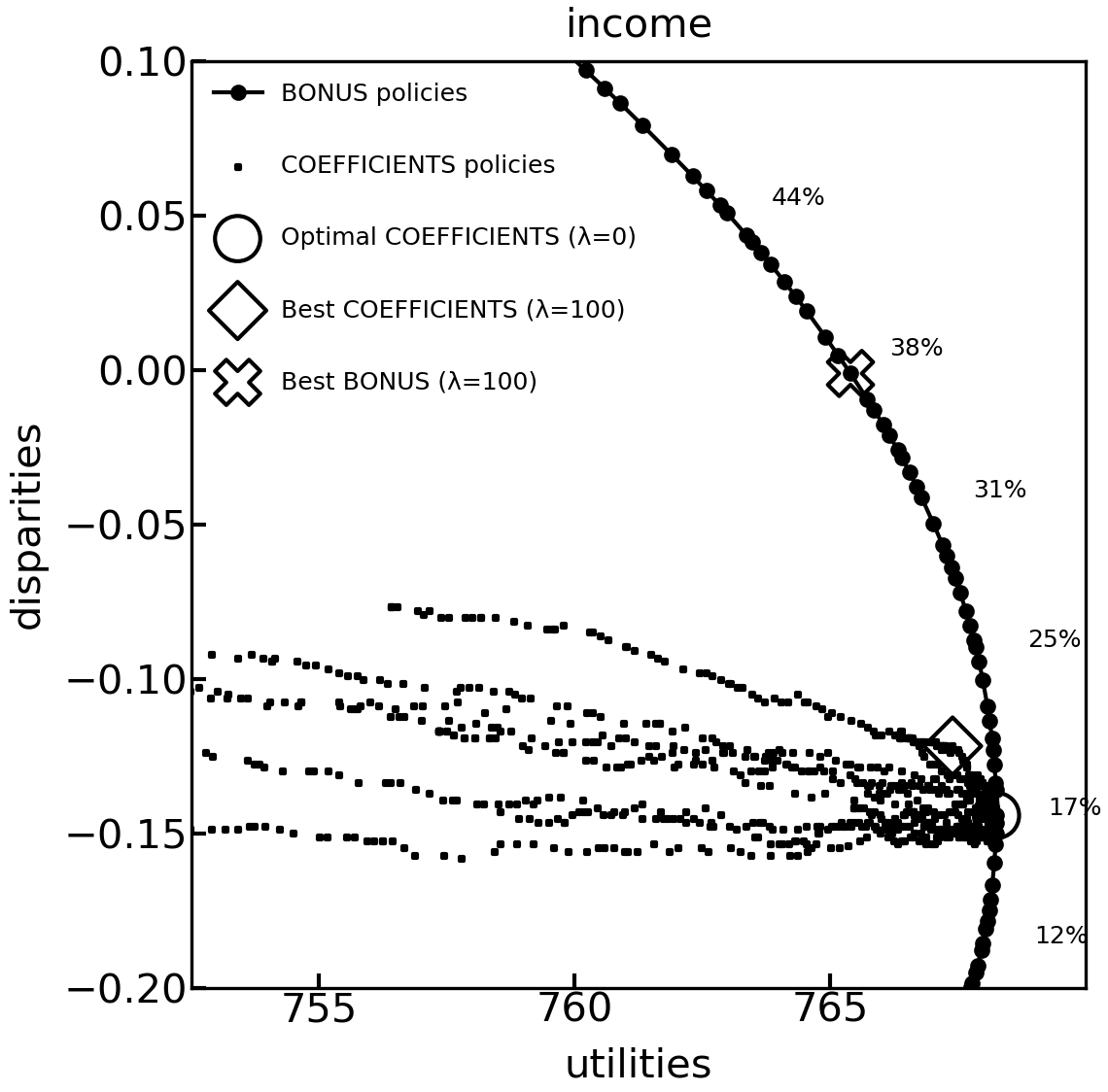}
\includegraphics[width=.95\columnwidth]{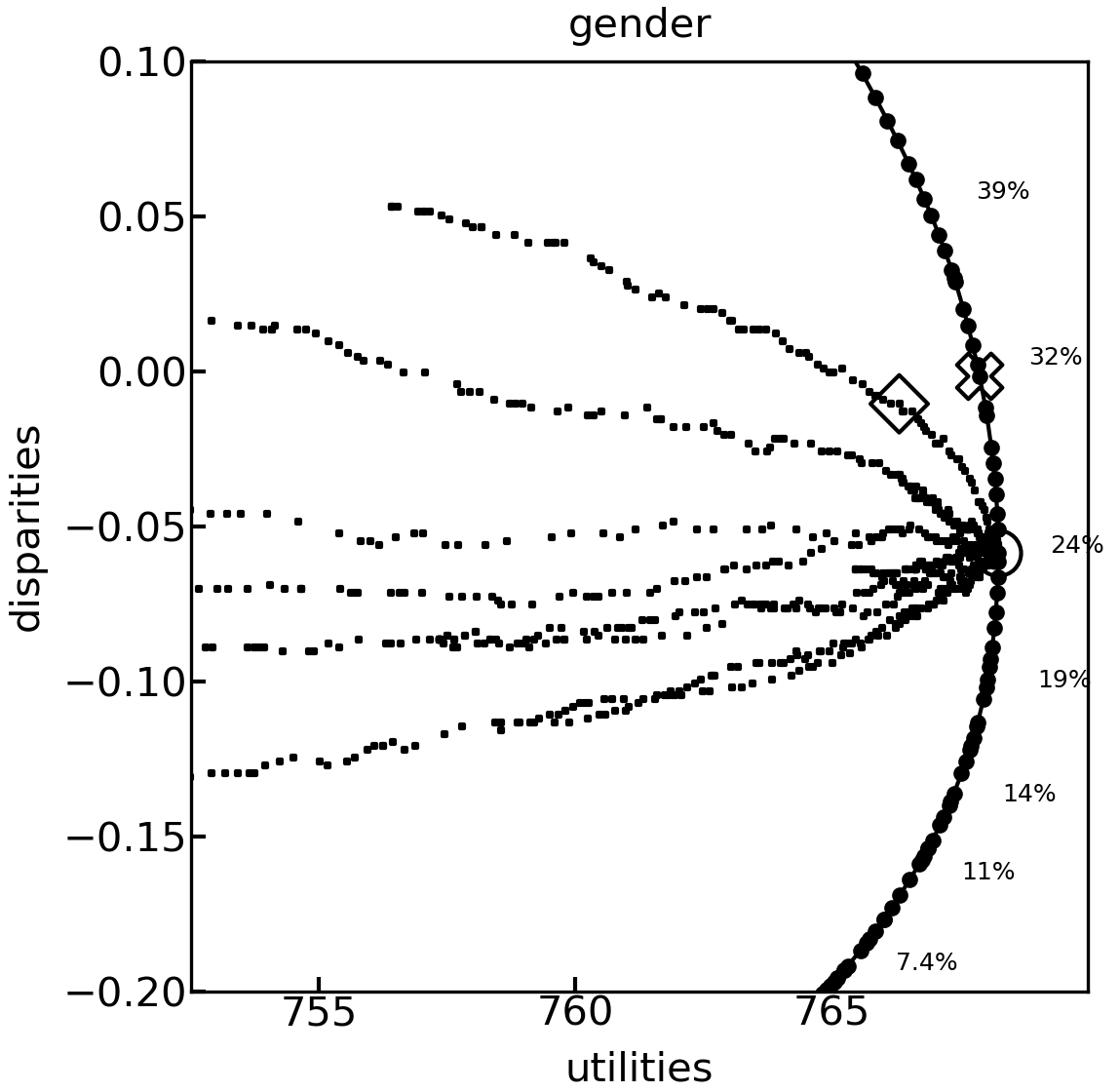}
\caption{Comparison of \merit and \bonus policies. Quotas for the disadvantaged groups under \quota policies corresponding to \bonus policies are shown on the right.}
\label{fig:singleattribute}
\end{center}
\end{figure}

\subsection{Affirmative action policies}
\label{sec:evaluation}

Figure~\ref{fig:singleattribute} shows the empirical performance of different affirmative action policies, each policy concerning a single sensitive attribute.
To produce the plot, we employed Algorithm~\ref{algo:searchingformerit} to search for alternative \merit policies and Algorithm~\ref{algo:searchforbonus} to search for different \bonus policies.
The plots also show a few indicative quota percentages for the corresponding \quota policy of the same attribute.
The plots also depict the optimal \merit policy with $\lambda = 0$, the best \merit and \bonus (and equivalent \quota) policy with $\lambda = 100$ discovered by the policy search algorithms~\ref{algo:searchingformerit}-\ref{algo:searchforbonus}, where $\lambda$ is the parameter that dictates the trade-off between \uoa and \dmd in the optimization function (see Problem~\ref{problem:utilitydisparity}).

We make a few observations.
\begin{inparaenum}
\item Based on its performance for income and school type, the \bonus and \quota policies provide better exploration of the trade-off between \uoa and \dmd: they lead to a better value of the optimization function (not shown in the plot) and policy instances with practically zero disparity \dmd, with little loss in \uoa.
\item The difference is less pronounced when gender is used as the sensitive attribute.
This is explained by the fact that student grades show smaller discrepancy across genders than across income and school types (see Figure~\ref{fig:incomegrades}-\ref{fig:gendergrades})
\item Consistently with the previous observation, the optimal \bonus policy uses larger bonus value for income and school type (\bonusValue = $30$ and $35$, respectively) than the one for gender (\bonusValue = $9.4$).
To put these numbers in context, consider that the admission scores of the top applicant is about $833$ points and of the last admitted applicant is in the range $722$ - $733$, depending on the attribute \sensitive. Hence, the gender bonus is negligible.
\item As expected, \bonus policies that lead to higher disparity (i.e., more favorable for the disadvantaged group) correspond to higher quota values for the corresponding \quota policy.
\end{inparaenum}

% \note[ChaTo]{Perhaps some commentary could be added of the sort ``The optimal bonus policy requires X points; consider that the admission scores of the first and last student admitted under the baseline are Y and Z''}

\spara{Baseline approaches}
We compare the policies discussed in this paper with policies based on alternative rankings produced by algorithms \median~\cite{feldman2015certifying} and \fair~\cite{zehlike2017fa}.
% \squishlist{

	% \item
	\median~\cite{feldman2015certifying} is applied on the scores \schoolgrades of each applicant and produces ``repaired'' feature values, such that it is impossible to distinguish from them whether an applicant belongs to group (\sensitive=\sensitiveGroup) or ($\sensitive\not=\sensitiveGroup$).
	Specifically, for each input score \schoolgrades, \median considers its distribution within each group, and produces repaired values such that an applicant of group (\sensitive=\sensitiveGroup) at the $q$-th quantile of the in-group distribution obtains the same repaired score as the corresponding applicant of group ($\sensitive\not=\sensitiveGroup$).
	This is done so that in-group order of applicants is preserved.
	For our purposes, we employ \median to obtain repaired scores \schoolgrades, use them to compute the admission score with weights \bestWeights for all applicants, rank them and admit the top.
	In all cases, the result we obtain empirically is identical to the result of \bonus for which \dmd = 0.
	This is explained theoretically: both policies admit top applicants from within each group -- and by construction, the policy based on \median admits the same proportion of applicants from each group, leading to zero disparity.

	% \item
	\fair~\cite{zehlike2017fa} takes as input the admission scores of applicants and produces a ranking that preserves the in-group rank of applicants, 
	% (i.e., better applicants of a given group are placed higher), 
	but at the same time satisfies ``ranked group fairness'' constraints.
	The latter ensures that within {\it any prefix} of the ranking both groups are sufficiently represented -- i.e., that the positions occupied by the disadvantaged group do not deviate significantly below a desired number.
	Deviation is defined in terms of a statistical test using two parameters $\alpha$ and $\rho$.
	For our purposes, we employed \fair on the admission scores obtained with weights \bestWeights, and admit the top applicants.
	We observed empirically that both \fair and \bonus led to admissions that span the same utility-vs-disparity curve (see Figure~\ref{fig:singleattribute}).
	This is explained theoretically: both policies admit top applicants from within each group and, for appropriately chosen parameters, they can lead to the same desired number of admitted applicants from each group.
	However, we note that \bonus policy does not lead necessarily to the same {\it ranking} as \fair, even when it leads to the same {\it set} of $k$ admitted applicants.
	This is because \fair enforces representation of the disadvantaged group for {\it every} prefix in the ranking, while \bonus ranks by selection score.
% }\squishend

\spara{Multiple Sensitive Attributes}
Figure~\ref{fig:multipleattributes} shows the results when all three attributes (income, school type and gender) are considered.
Again, we mark the optimal \merit policy for $\{\lambda_i = 0\}$, as well as the optimal \merit and \bonus (and equivalent \quota) policy with $\{\lambda_i = 100\}$, where $\{\lambda_i\}$ are the parameters that dictate the trade-off between \uoa and \dmd in the optimization function (see Problem~\ref{problem:multiplesensitiveattributes}).
Consistently with the case of single sensitive attributes, the optimal bonus values assigned to income and school type (\bonusValue = $16$ and $26$, respectively) are larger than the one for gender (\bonusValue = $7$). We note that the baseline  \fair cannot handle multiple sensitive attributes.

\begin{figure}
\centering\includegraphics[width=.92\columnwidth]{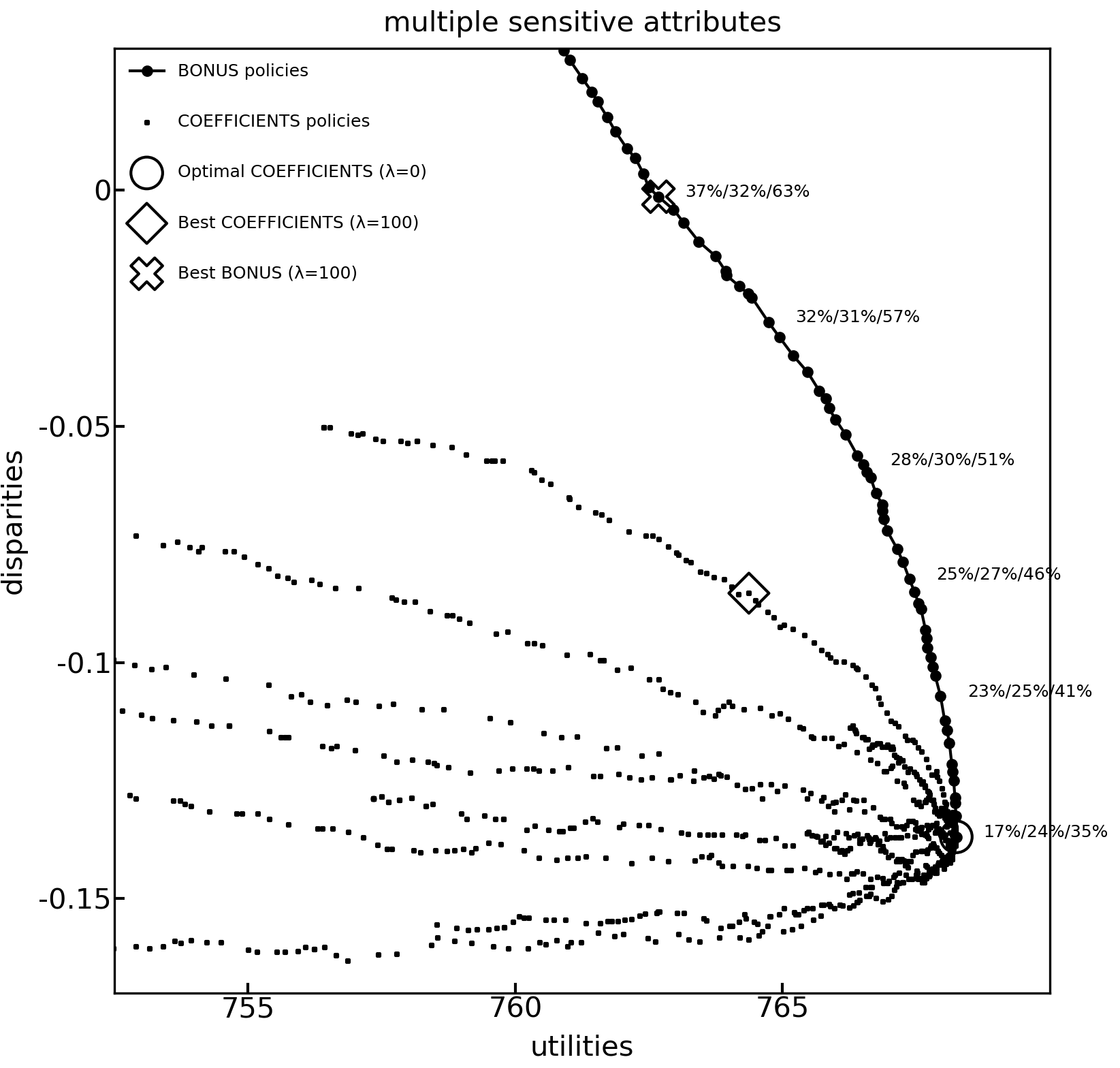} \caption{Performance of admission policies for multiple sensitive attributes. Percentages on the right are corresponding quotas for the disadvantaged groups (by income, gender, and school type) under \quota policies matching sample \bonus policies.}
\label{fig:multipleattributes}
\end{figure}

\iffalse

	\begin{figure}
	\begin{center}
	\includegraphics[width=\columnwidth]{img/explore_coeffs_gender.png}
	\includegraphics[width=\columnwidth]{img/explore_coeffs_income.png}
	\includegraphics[width=\columnwidth]{img/explore_coeffs_school_type.png}
	\caption{\merit policies}
	\end{center}
	\end{figure}

	\begin{figure}
	\begin{center}
	\includegraphics[width=\columnwidth]{img/explore_bonus_gender.png}
	\includegraphics[width=\columnwidth]{img/explore_bonus_income.png}
	\includegraphics[width=\columnwidth]{img/explore_bonus_school_type.png}
	\caption{\bonus policies}
	\end{center}
	\end{figure}

	\begin{figure}
	\begin{center}
	\includegraphics[width=\columnwidth]{img/joint_bonus.png}
	\caption{Joint \bonus policies}
	\end{center}
	\end{figure}

\fi

% \todo[MM]{
% 	Distribute \bonusValue points over multiple sensitive attributes and optimize generalized objective.
% }
% \todo[MM]{
% 	Compare current policy vs best alternatives.
% }

% \input{extensions}
%!TeX root = admissions.tex

\section{Conclusions}
\label{sec:conclusions}

In this paper, we demonstrated how the problem of designing affirmative action policies can be addressed in an algorithmic framework.
We explored three types of policies, i.e., \merit, \bonus, and \quota.
We showed the equivalence of \bonus and \quota policies, and provided efficient search algorithms for the optimal parameters of \merit and \bonus policies.
We analyzed real data for university admissions and obtained instances of admission policies that achieved good utility (performance of admitted applicants) while virtually eliminating disparity (i.e., the difference of the proportions of admitted applicants from advantaged and disadvantaged groups).
% Future steps include the application of the proposed algorithms on a wider range of real settings and the treatment of other measures of fairness.

\noindent
\section*{Acknowledgments} 
This work was supported by the HUMAINT project of the European Commission's Joint Research Centre for Advanced Studies in Seville. 
C. Castillo was partially funded by La Caixa project LCF/PR/PR\-16/11110009.
G. Barnabo was partially supported by ERC Advanced Grant  788893 AMDROMA "Algorithmic and Mechanism Design Research in Online Markets".
S. Celis was partially funded by Complex Engineering Systems Institute (CONICYT-PIA-FB0816).

%\clearpage
% \balance
\bibliographystyle{ACM-Reference-Format}
\bibliography{biblio}
%\balancecolumns % GM June 2007

\end{document}